\documentclass[lettersize,journal]{IEEEtran}
\usepackage{amsmath,amsfonts}
\usepackage{ragged2e}
\usepackage{siunitx}
\usepackage{caption}
\usepackage{amsmath}
\usepackage{amssymb}
\usepackage{array}
\usepackage{textcomp}
\usepackage{stfloats}
\usepackage{url}
\usepackage{epsfig}
\usepackage{adjustbox}
\usepackage{verbatim}
\usepackage{graphicx}
\usepackage{cite}
\usepackage{tikz}
\usepackage[flushleft]{threeparttable}
\usepackage{amsthm}
\usepackage{subcaption}
\usepackage[normalem]{ulem}
\usepackage{etoolbox}
\usepackage{amssymb}
\newtheorem{remark}{Remark}

\usepackage{mathtools}

\usepackage{breqn}
\usepackage[multiple]{footmisc}

\definecolor{ForestGreen}{RGB}{34,139,34}
\usepackage{bbm}

\newtheorem{theorem}{Theorem}

\DeclareMathOperator*{\argmax}{arg\,max}

\allowdisplaybreaks

\usepackage{algorithm}
\usepackage{algpseudocode}
\usepackage{setspace}
\usepackage[font=footnotesize,labelfont=bf,skip=5pt]{caption}
\usepackage{subcaption}
\usepackage{enumitem}
\usepackage{bigints}
\usepackage[colorlinks]{hyperref}
\usepackage{flushend}
\hypersetup{
	urlcolor     = magenta, 
	linkcolor    = blue, 
	filecolor	    = cyan, 
	citecolor   = {blue}, 
}

\usepackage{scalerel}[2016-12-29]

\usepackage{wrapfig}
\usepackage[outdir=./]{epstopdf}
\usepackage{multirow}
\begin{document}
\title{Practical RIS Gain without the Pain via Randomization and Opportunistic Scheduling in 5G~NR Wireless Systems: Theory and Experiments}
 \author{Nekkanti Guna Sai Kiran, L. Yashvanth,~\emph{Member, IEEE}, Raju Malleboina,~\emph{Student Member, IEEE}, Venkatareddy Akumalla,~\emph{Member, IEEE},  Debdeep Sarkar,~\emph{Senior Member, IEEE}, and Chandra R. Murthy,~\emph{Fellow, IEEE}
 \thanks{N. G. S. Kiran, R. Malleboina, V. Akumalla, D. Sarkar, and C. R. Murthy are with the Dept. of ECE, Indian Institute of Science (IISc), Bengaluru, India 560012. E-mails: \{gunanekkanti, malleboinar\}@iisc.ac.in, venkatareddy@fsid-iisc.in, \{debdeep, cmurthy\}@iisc.ac.in.} \thanks{L. Yashvanth is with Dept. of EEE, University College London, WC1E 6BT London, U.K. He was with Dept. of ECE, IISc, India during the development of this work. E-mail: l.yashvanth@ucl.ac.uk, yashvanthl@iisc.ac.in.}
 \thanks{The authors thank
Prof. K. J. Vinoy, Dept. of ECE, IISc, Bengaluru for providing directional horn
antennas for the experiments.}
\thanks{A part of this work has been accepted at the IEEE International Conference on Communications (ICC), 2026, Glasgow, U.K. \cite{ICC_Paper}}
}
 
 \maketitle
\begin{abstract}
In this paper, we theoretically analyze and experimentally demonstrate the performance gains achievable by integrating an in-house built reconfigurable intelligent surface (RIS) with a 5G new radio (NR) system implemented using the OpenAirInterface (OAI) software stack. 
Unlike conventional RIS-assisted systems that rely on explicit channel state information (CSI) estimation followed by RIS phase configuration optimization, we adopt a low-complexity approach in which the RIS phase states are randomly switched among predefined configurations. The resulting channel fluctuations are opportunistically exploited by the inherent proportional fair (PF) scheduling mechanism of 5G NR. We develop a theoretical framework that characterizes the interaction between RIS switching dynamics and PF scheduling. Based on this framework and the associated analysis, we provide design guidelines for selecting the RIS switching time $T_s$ and the PF throughput averaging window $T_c$ that maximize the system throughput. Experimental evaluations on the 5G NR testbed demonstrate improvements in key performance metrics, including reference signal received power (RSRP), block error rate (BLER), modulation and coding scheme (MCS) index, and throughput. Our key takeaway is that randomly configured RIS operation with appropriately chosen system parameters can achieve performance comparable to optimized RIS designs, with no additional overhead compared to a conventional 5G NR system. More importantly, it requires no coordination between the RIS and the 5G NR system.
\end{abstract}
\begin{IEEEkeywords}
Reconfigurable intelligent surfaces, 5G NR, OpenAirInterface, Proportional-fair scheduling, Experimental validation.
\end{IEEEkeywords}
\color{black}

\section{Introduction}
Reconfigurable intelligent surfaces (RISs) enable one to control the wireless propagation environment by adjusting the phase shifts of many low-cost passive reflecting elements \cite{Basar_IA_2019}. It improves signal strength, coverage, and throughput at very low hardware and energy cost, making it a promising technology for beyond-5G systems \cite{Renzo_EURASIP_2019,Emil_arxiv_2021}. Most RIS studies in the literature rely on accurate channel state information (CSI) of the cascaded base station (BS)-RIS-user links and optimized phase shift configuration in order to enhance the signal-to-noise ratio (SNR) and throughput \cite{Wu_GLOBECOM_2018}. However, achieving these gains in practical wireless systems involves significant overhead in channel estimation, computational complexity, and control signaling~\cite{Mishra_ICASSP_2019,Nadeem_JCS_2020, Cai_2022_TCOM, Wu_ICOMM_2020}. 
These challenges are exacerbated in dynamic multiuser environments, where rapidly varying channels require frequent channel re-estimation and phase re-optimization. Further, integrating RISs into the 5G new radio (NR) protocol, accounting for its CSI reporting, modulation and coding scheme (MCS) selection, and medium access control (MAC) scheduling mechanisms, can significantly reduce the gains of optimization-based designs.  
Hence, there is a need for RIS strategies that exploit its reflectivity properties without relying on explicit CSI acquisition and phase optimization. 

In this work, we study schemes where the RIS phases are randomly sampled independent of the instantaneous CSI. The key idea is that in multi-user systems, random RIS configurations create large channel variations which can then be exploited through opportunistic scheduling~\cite{Viswanath_TIT_2002}. In particular, for a given randomly chosen RIS configuration, with high probability, at least one user equipment (UE) will experience favorable channel conditions; opportunistically scheduling that UE procures RIS benefits without incurring the overheads mentioned above.  
Although prior works on such ideas exist~\cite{Nadeem_WCL_2021,Yashvanth_TSP_2023,Yashvanth_TSP_2026}, they assume a large number of UEs and do not account for practical trade-offs that emerge during the implementation of these schemes.

Most of the existing studies on RIS-assisted communication systems focus on theoretical analysis and simulations, while experimental validations of RIS-assisted systems remain under-explored. Some of the early experimental works proposed amplitude-and-phase modulation techniques and low-resolution RIS prototypes \cite{Tang_JSAC_2020,Linglong_Exp_Access_2020}, and subsequent efforts developed realistic reflection models \cite{Lin_Exp_TCOM_2024}, examined RIS deployment strategies \cite{Yanqing_Exp_WCL_2023}, and demonstrated indoor and outdoor performance gains using optimized RIS configurations \cite{Emil_Exp_TCOM_2021}. Other experimental studies include analysis of RIS-assisted path loss models \cite{Wankai_Exp_TWC_2021}, understanding the performance impact of RISs in multiple mobile operator systems~\cite{lodro2022experimental}, demonstration of coverage extension \cite{Kayraklik_Exp_ICC_2023}, comparison of active and passive RIS architectures \cite{Linglong_Exp_TCOM_2023}, and bit error rate and image communication~\cite{Arun_2025}.
However, all these studies rely on vector network analyzers, signal generators, or software-defined radios with 5G system simulators or emulators, and do not consider end-to-end over-the-air (OTA) 5G NR systems. Consequently, the impact of practical aspects such as CSI acquisition overhead, MAC-layer scheduling behavior, MCS adaptation, and retransmission mechanisms on RIS-assisted performance remains largely unexplored. 

To address these gaps, we leverage the OpenAirInterface (OAI) 5G NR protocol stack, which enables the development and testing of OTA systems compliant with 3GPP standards~\cite{OAI_7,openairinterface_testbed}. 
Although ~\cite{sahin2025ris} considers the performance of RIS in real-time 5G setups, it is based on CSI-dependent optimized phase configuration, which incurs significant complexity. 
In our setup, the RIS phase coefficients are randomly sampled from a predefined distribution and is reconfigured at an appropriate switching interval, which introduces fluctuations in the composite wireless channel. Then, using a proportional fair (PF) scheduler with an appropriately chosen exponentially weighted moving average (EWMA) constant, the 5G BS exploits the channel fluctuations and opportunistically schedules a UE with the favorable channel conditions for data transmission. Our key contributions are as follows:
\begin{enumerate}
    \item We develop an end-to-end 5G NR testbed integrating the OAI framework and an in-house-built RIS. Then, we $a)$ demonstrate the end-to-end working mechanism of a real-time RIS-assisted 5G NR system, and $b$) quantify the performance gains offered by the RIS. 
    \item Under finite switching time of the RIS states, we analytically prove that, when $T_c$, the PF scheduler's EWMA constant, becomes large, the performance of  randomized RIS with phases sampled from an appropriate distribution converges to that obtained by optimized RIS phase under round-robin (RR) scheduling of UEs. (See Theorem~\ref{thm:infinite_Tc}.)
    \item We analytically derive the relationship between $T_c$ and the key system parameters such as the number of UEs, the RIS switching interval, and the number of RIS states, such that the achieved throughput is arbitrarily close to its optimal value. (See Theorem~\ref{thm:finite_Tc}.) 
    \item  We study the temporal variation of the reference signal received power (RSRP), MCS index, block error rate (BLER), and throughput, and experimentally demonstrate that a randomly tuned RIS offers a performance close to an optimized RIS, under PF scheduling.
    \item Finally, we show that even when the UE locations are arbitrary, the performance in the presence of a randomized RIS is always significantly better than a system without an RIS. Thus, even na\"ive randomized RIS-designs can be beneficial in real-time 5G wireless systems.
\end{enumerate}
The key takeaway from this work is that integrating an RIS into 5G systems need not introduce substantial complexity. Randomized RIS, along with native 5G NR scheduling mechanisms, are sufficient to deliver most of the gains commonly attributed to optimization-based RIS designs, offering a practical and scalable path for real-world RIS deployment.
\section{System Description}
\subsection{Mathematical Model}\label{Mathematical_discription}
We consider a system in which a BS, referred to as a gNodeB (gNB) in 5G NR, serves $K$ UEs with the assistance of an RIS comprising $N$ reflecting elements. The RIS enhances the BS–UE links, particularly in scenarios where the direct line-of-sight (LoS) paths are blocked. 
Let $\mathbf{h}_1 \in \mathbb{C}^{N}$ denote the channel from the BS to the RIS, and let $\mathbf{h}_{2,k} \in \mathbb{C}^{N}$ be the channel from the RIS to UE-$k$. During time slot $t$, the received signal at UE-$k$ can be expressed as
\begin{align}
   y_k(t) = \sqrt{\beta_{k}} \mathbf{h}_{2,k}^{T} \boldsymbol{\Phi}(t) \mathbf{h}_1 x_k(t) + n_k(t), \label{eq:yk2}
\end{align}
where $\beta_{k}$ denotes the large-scale path loss of the cascaded BS-RIS-UE-$k$ channel, and $\boldsymbol{\Phi}(t)$ is the diagonal RIS phase-shift matrix given by 
$\boldsymbol{\Phi}(t) = \mathrm{diag}\!\left(e^{j\theta_1(t)}, e^{j\theta_2(t)}, \dots, e^{j\theta_N(t)}\right)$ with $\theta_n(t)$ being the phase shift introduced at $n$th RIS element. Further, $x_k(t)$ is the $k$th UE's transmit symbol, and $n_k(t)$ represents the additive Gaussian noise at UE-$k$.
For notational convenience, we define the cascaded BS-RIS-UE-$k$ channel as 
$\mathbf{h}_{c,k} \triangleq \mathbf{h}_1 \odot \mathbf{h}_{2,k}$, where $\odot$ denotes the Hadamard (element-wise) product. 
Furthermore, let the RIS phase vector (with conjugation) be defined as 
$\boldsymbol{\phi}(t) \triangleq \left[e^{-j\theta_1(t)}, \dots, e^{-j\theta_N(t)}\right]^T$.
Then, we can rewrite~\eqref{eq:yk2} as
\begin{equation}
    y_k(t) = \sqrt{\beta_{k}} \boldsymbol{\phi}^{H}(t) \mathbf{h}_{c,k} x_k(t) + n_k(t). \label{eq:yk3}
\end{equation}
Accordingly, the effective end-to-end channel between the BS and UE-$k$ at time slot $t$ is given by 
\begin{equation}
    g_k(t) = \sqrt{\beta_{k}} \boldsymbol{\phi}^{H}(t) \mathbf{h}_{c,k}. \label{eq_overall_ch_UE_k}
\end{equation}
We next describe the features of the in-house-Built RIS and the OAI-based $5$G network used in this work. 
\subsection{Design and Modeling of the In-house-Built RIS}
We have built a one-bit-coded RIS operating at $5$~GHz with a bandwidth of $300$~MHz \cite{Malleboina_EuCAP_2025}. The RIS is implemented as a single-layer $32 \times 32$ uniform planar array (UPA) of $1024$ unit cells, each measuring $\lambda_0/4 \times \lambda_0/4$ (at $5$~GHz, $\lambda_0 = 60$~mm). This compact design minimizes reflection loss while ensuring mechanical stability and ease of fabrication. Each unit cell integrates an SMP1302--040LF PIN diode to switch between two reflection states, on and off, with corresponding reflection phases of $-28^\circ$ and $150^\circ$, yielding a phase difference close to $180^\circ$ for efficient one-bit digital coding. Beam steering is achieved by analytically calculating the reflection-phase gradient across the RIS surface \cite{Raju_AWPL_2023}, then quantizing it into discrete $0^\circ$ and $180^\circ$ phase levels. Full-wave CST simulations confirm scan-loss-free steering from $-60^\circ$ to $60^\circ$, with a half-power beamwidth of $6^\circ$ and sidelobe levels below $-9$~dB. 

The RIS is controlled by an field-programmable gate array (FPGA)-based biasing module, enabling real-time reconfiguration and adaptive beam steering. DC isolation and biasing lines are carefully included to ensure stable operation. Finally, anechoic chamber experiments validate the design, showing measured radiation patterns consistent with simulations, with accurate beam direction and low sidelobe distortion~\cite{Malleboina_EuCAP_2025}.  Thus, the RIS achieves wide-angle beam steering with superior angular resolution compared to conventional reflectarrays. The analytical phase-distribution method simplifies codebook generation and reduces computational load, while the use of PIN diodes and radial stubs lowers cost, reduces biasing complexity, and scales efficiently to large arrays. 
\subsection{RIS-Aided 5G NR via OAI framework}\label{sec_OAI_description}
\subsubsection{Testbed Architecture}\label{sec_testbed_architecture}
The $5$G NR testbed is implemented using the OAI platform \cite{oai_platform2014}, in a monolithic stand-alone (SA) configuration that supports the $5$G NR radio access protocol (RAN) protocol stack and core network. In our work, universal software radio peripherals (USRPs) B$210$ serve as radio units (RUs) to enable OTA communication between the gNB and the UEs. The OAI $5$G RAN and UE software stacks run on an Intel i$7$ desktop and laptops equipped with $12$~cores running at $4.9$~GHz, $32$~GB RAM, and Ubuntu $22.04$~LTS with low-latency kernels. For synchronization, an OctoClock provides a $10$~MHz reference 
signal to all devices.
\subsubsection{Communication Resources}
The system operates over a $40$~MHz bandwidth centered at $\sim 5$~GHz. With numerology set to~$1$ ($30$~kHz subcarrier spacing), each time slot has a duration of $0.5$~ms and contains $14$ orthogonal frequency division multiplexing (OFDM) symbols. A time-division duplexing (TDD) scheme with a downlink (DL) – uplink (UL) periodicity of $10$ slots ($5$~ms) is employed, With $6$ slots assigned to DL, $3$ to UL, and $1$ slot is configured as a mixed slot. 
Across the $40$~MHz band, the system uses $106$ physical resource blocks (PRBs), with up to $13$ OFDM symbols per PRB being allocated for DL transport blocks (TB) in each time slot~\cite{3gpp_NR_MCS_table}. 
\subsubsection{Discrete Rate-Adaptation and Scheduling}\label{sec_MCS}
OAI uses discrete rate-adaptation to set the DL throughput according to instantaneous channel conditions. The adaptation is governed by the MCS, which specifies (a) the modulation order and (b) the code rate of a forward error correction (FEC) code. In the 3GPP-defined $64$-QAM MCS Index Table for the physical DL shared channel (PDSCH)~\cite{3gpp_NR_MCS_table}, there are 29 possible MCS indices. Higher indices correspond to higher-order constellations/rates, and yield larger spectral efficiency (SE). In OAI, the MCS index selection primarily depends on the BLER, and can vary from $3$ to $27$. The BLER is estimated using the hybrid automatic repeat request (HARQ) mechanism, which handles TB retransmissions~\cite{Lagen_Exp_ICC_2020}, by counting TB failures over a $100$~ms window. The MCS index is increased or deceased by $1$ when the BLER falls below $0.05$ or exceeds $0.15$, respectively, to approximately maintain the BLER at a target of $0.1$. If a negative acknowledgment (NACK) is received, the same TB is retransmitted up to $3$ times using the same MCS before being discarded. User scheduling in OAI is performed by a PF scheduler, which implements an \emph{opportunistic} UE selection rule to maximize the system throughout while maintaining fairness. At time slot $t$, the BS schedules a UE with index $k^*(t)$ across all PRBs according to
\begin{equation}\label{eq:PF_metric}
k^*(t) = \argmax\nolimits_{k \in \{1,2,\ldots,K\}} \ \ \eta_{k,\textrm{PF}}(t) \triangleq {R_k(t)}\Big/{T_k(t)},
\end{equation}
where $\eta_{k,\textrm{PF}}(t)$ is the so-called \emph{PF metric} of UE-$k$ and $R_k(t) = \log_2\left(1+\left|g_k(t)\right|^2{P}\Big/{\sigma^2}\right)$ denotes the instantaneous SE of UE-$k$, 
where $g_k(t)$ is the effective channel at UE-$k$ given by \eqref{eq_overall_ch_UE_k}, $P$ is the BS transmit power, and $\sigma^2$ is the noise variance, and $T_k(t)$ is the average SE of UE-$k$ till time $t$ 
updated using an EWMA, given by
\begin{equation}\label{eq_PF_scheduler_Tk_update}
T_k(t+1) = \left(1-\tfrac{1}{T_c}\right)T_k(t) + \mathbbm{1}\!\left\{k = k^*(t)\right\} \frac{1}{T_c} R_k(t).
\end{equation}
In~\eqref{eq_PF_scheduler_Tk_update}, $T_c > 1$ is the EWMA constant, also called the PF window length. A larger $T_c$ yields larger long-term throughput by favoring more opportunism in UE scheduling, while a smaller $T_c$ prioritizes short-term fairness across UEs.
\subsection{Overall Mechanism of RIS-aided 5G NR}
As shown in  Fig.~\ref{fig:FULL_BLOCK_DIAGRAM}, the system consists of a gNB, multiple UEs, and an FPGA-controlled RIS. The 
FPGA switches the RIS phase vector at intervals of $T_s$; thus, the RIS configuration remains fixed for $T_s$~s before switching to a new phase state. 
Each UE monitors the DL channel quality through standards-mandated procedures and
reports the RSRP to the gNB with a periodicity of approximately $80$~ms. In addition, the UE that is scheduled for data transmission performs cyclic redundancy check (CRC) on the decoded data and feeds back ACK/NACK information to the gNB, indicating successful or unsuccessful decoding, respectively.
Upon receiving an ACK, the gNB schedules the next UE according to the PF metric defined in~\eqref{eq:PF_metric}, using an MCS that is adaptively adjusted according to the observed BLER of that UE, inferred from HARQ feedback. 
However, if a NACK is received, the gNB prioritizes retransmitting the packets to the already scheduled UE using the same MCS under the chase-combining (CC)-based HARQ operation, instead of scheduling a new UE. 

Once the UE is scheduled, resource allocation is performed at the gNB, which includes symbol mapping according to the selected MCS and OFDM modulation over a $40$~MHz bandwidth. The resulting baseband signal is upconverted to the carrier frequency of $5$~GHz and transmitted.
The transmitted signal propagates through the RIS-assisted channel and is received by the scheduled UE, where channel estimation, demodulation, decoding, and CRC verification are performed again, and this process continues. It is important to note that the choice of RIS configuration determines the instantaneous channel conditions and impacts the RSRP reports, selected MCS levels, HARQ ACK/NACK outcomes, and the overall throughput and fairness performance across UEs.
\begin{figure*}[t]
    \centering
\includegraphics[width=0.8\linewidth]{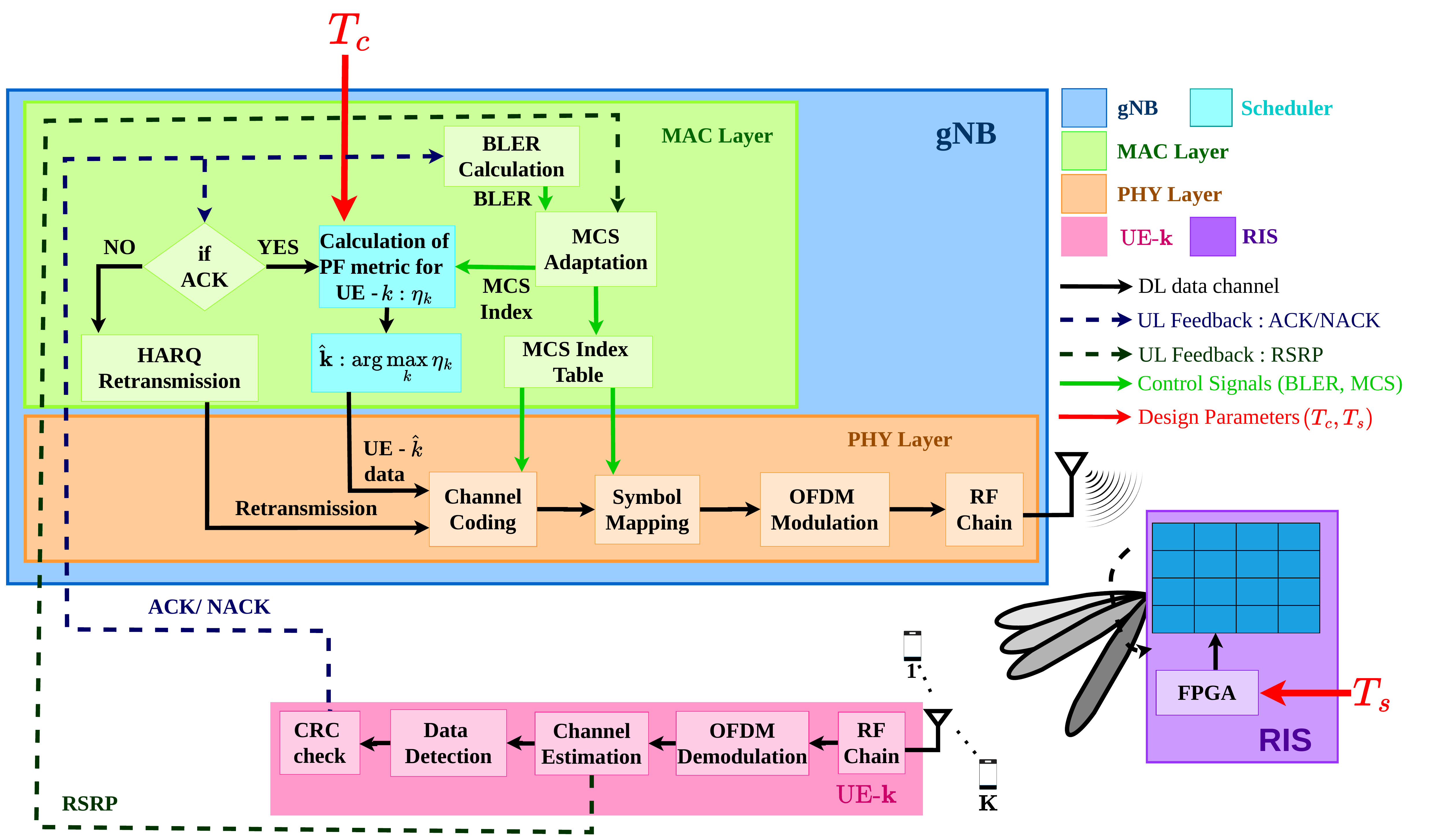}
    \caption{Block-diagram of the overall RIS-assisted $5$G NR-based multi-UE communication system.}
    \label{fig:FULL_BLOCK_DIAGRAM}
\end{figure*}
\subsection{Problem Description}
One of the objectives of this paper is to demonstrate the integration of a RIS into a $5$G NR-compliant system. A second objective is to show that randomly selected RIS phase configurations, when combined with the inherent opportunistic scheduling of $5$G NR, can deliver performance gains \emph{comparable to optimized RIS operation} while avoiding additional system complexity. 
To this end, we experimentally investigate:
\begin{enumerate}[leftmargin=*]
    \item The achievable performance gain in the presence of RIS compared to the system without RIS,
    \item The variation of RSRP, MCS index, BLER, and throughput across UEs under randomly sampled RIS configurations with PF scheduling used at the 5G gNB, 
    \item Comparison of the throughput of randomized RIS under PF scheduling and optimized RIS under RR scheduling,
    \item The relationship between RIS switching time and the EWMA constant of the PF scheduler, and their impact on the performance,
    \item The utility of randomized RIS in terms of the throughput, when the UEs are arbitrarily located in the area of interest. 
\end{enumerate}
Furthermore, we provide a rigorous theoretical analysis to explain the key results observed in the randomized RIS-aided $5$G NR system, as described next. 
\section{Randomized RIS-Assisted Opportunistic Scheduling in $5$G NR: Theory}\label{Theory}
In this section, we theoretically establish that a randomized RIS in conjunction with the native PF scheduling in $5$G NR can procure the optimal benefits of an RIS. We also characterize the system parameter settings that realize the gains in practice. To this end, we first present the \emph{sampling} distribution from which the random RIS phase vectors are drawn in every time slot.
\subsection{Choice of Random RIS Configurations}
\label{Sec_RIS_distribution}
Recall from Sec.~\ref{Mathematical_discription} that the $N$-element RIS is configured
as an $N_x \times N_y$ UPA with $N = N_x N_y$, where
$N_x = N_y = 32$ and the inter-element spacing is $d_x = d_y = d = \lambda_0/4$.
Let $(\nu,\psi)$ denote the tuple corresponding to azimuth and elevation angles at the RIS.
The RIS array response vector is given by
\begin{equation}
\mathbf{a}_{\mathrm{RIS}}(\nu,\psi)
=
\mathbf{a}_x(\nu,\psi) \otimes \mathbf{a}_y(\nu,\psi),
\label{eq:effective_array}
\end{equation}
where the steering vector components are given by 
$[\mathbf{a}_x(\nu,\psi)]_m = e^{j \frac{2\pi}{\lambda} m d_x \cos\psi \cos\nu}$ for $m=0,\ldots,N_x-1$, 
and 
$[\mathbf{a}_y(\nu,\psi)]_n = e^{j \frac{2\pi}{\lambda} n d_y \cos\psi \sin\nu}$ for $n=0,\ldots,N_y-1$.

Here, $\otimes$ denotes the Kronecker product, and $\mathbf{a}_{\mathrm{RIS}}(\nu,\psi) \in \mathbb{C}^{N}$
is the overall array response vector for the UPA-based RIS.
Then, the channels in different links can be modeled as 
$\mathbf{h}_1 = \mathbf{a}_{\mathrm{RIS}}(\nu_{\mathrm{BR}},\psi_{\mathrm{BR}})$ 
and 
$\mathbf{h}_{2,k} = \alpha_k\,\mathbf{a}_{\mathrm{RIS}}(\nu_k,\psi_k)$.
where $\alpha_k
\overset{\text{i.i.d.}}{\sim} \mathcal{CN}(0,1)$ is the fading component in the channel at UE-$k$, and $(\nu_{\mathrm{BR}}, \psi_{\mathrm{BR}})$ and $(\nu_k,\psi_k)$ represent the azimuth-elevation angle pair at the RIS corresponding to the direction of arrival from BS and direction of departure to UE-$k$, respectively. 
Then, the cascaded fading channel becomes 
$\mathbf{h}_{c,k} = \alpha_k \, \mathbf{a}_{\mathrm{RIS}}(\nu_{c,k},\psi_{c,k})$,
where $(\nu_{c,k},\psi_{c,k})$ denotes the \emph{cascaded} azimuth--elevation
angle pair at the RIS satisfying 
\begin{align}
    \cos\psi_{c,k}\cos\nu_{c,k} &=
\cos\psi_{\mathrm{BR}}\cos\nu_{\mathrm{BR}} +
\cos\psi_k\cos\nu_k,  \\
\cos\psi_{c,k}\sin\nu_{c,k} &=
\cos\psi_{\mathrm{BR}}\sin\nu_{\mathrm{BR}} +
\cos\psi_k\sin\nu_k.
\end{align}

Substituting the above in \eqref{eq_overall_ch_UE_k}, the effective channel at UE-$k$ is
\begin{equation}
g_k(t)
=
\sqrt{\beta_k}\,\alpha_k\,
\boldsymbol{\phi}^H(t)\,
\mathbf{a}_{\mathrm{RIS}}(\nu_{c,k},\psi_{c,k}).
\label{eq:gk_upa_az_el}
\end{equation}
Now, following the codebook approach to select beamforming (BF) vectors as in $5$G NR~\cite{3gpp_NR_MCS_table}, let the sample space of RIS phase configuration vectors be
$\{\boldsymbol{\phi}_1,\ldots,\boldsymbol{\phi}_L\}$, and $\boldsymbol{\phi}_\ell \in \mathbb{C}^N$, $\ell\in[L]$
is selected with probability $p_\ell$, with $p_\ell\ge 0$, $\sum_{\ell=1}^L p_\ell=1$. Then,
$\mathcal{S}
\triangleq
\left\{
\left(\boldsymbol{\phi}_1,p_1\right),
\ldots,
\left(\boldsymbol{\phi}_L,p_L\right)
\right\}$
defines the sampling distribution of RIS configurations. Using \cite[Theorem~$1$]{Yashvanth_WCL_2025}, the optimal conditional probability density function (PDF) of the RIS vector given a cascaded channel is 
$g_{\boldsymbol{\theta}}^{\mathrm{opt}}(\boldsymbol{\theta}')
=
\delta\!\left(\boldsymbol{\theta}' - \mathcal{F}(\mathbf{h}_{c,k})\right)$,
where $\delta(\cdot)$ denotes the Dirac delta function and $\mathcal{F}(\cdot)$ represents the \emph{phase-mapping rule}:
\begin{equation}
    \mathcal{F}:\mathbb{C}^{N}\rightarrow\mathcal{C}: \mathbf{h}_{c,k}\mapsto \boldsymbol{\theta}'=\exp\!\left(j\,\angle \mathbf{h}_{c,k}\right),
\end{equation}
and $
\mathcal{C}
\triangleq
\left\{
\boldsymbol{\theta} \in \mathbb{C}^{N}
\;\middle|\;
|\theta_n| = 1,\;
n=1,\ldots,N
\right\}.$
In other words $\mathcal{F}(\cdot)$ maps $\mathbf{h}_{c,k}$ to its
corresponding RIS phase configuration with unit-modulus entries. 
Then, by the law of total probability with respect to the channel distribution,
the unconditional PDF of the RIS phase configuration vector is given by
\begin{equation}
f_{\boldsymbol{\theta}}^{\mathrm{opt}}(\boldsymbol{\theta}')
= \int
\delta\big(\boldsymbol{\theta}' - \mathcal{F}(\mathbf{h})\big)
p_{\mathbf{h}_c}(\mathbf{h})\, d\mathbf{h},
\label{eq:RIS_density_general}
\end{equation}
where $p_{\mathbf{h}_c}(\cdot)$ is the PDF of the cascaded channel vector. 

We tile the cascaded azimuth-elevation domain into $L$
regions $\{\mathcal{A}_\ell\}_{\ell=1}^L$ such that $\bigcup_{\ell=1}^L \mathcal{A}_\ell = \mathcal{D}$, and $\mathcal{A}_\ell \cap \mathcal{A}_m = \varnothing,\ \ell \neq m$, 
where $\mathcal{D}$ denotes the range of the cascaded angles across UEs. Then, for each region $\mathcal{A}_\ell$, $\ell=1,\ldots,L$, we define a representative cascaded azimuth-elevation angle pair $(\nu^{(\ell)},\psi^{(\ell)})$, which is chosen as the centroid of $\mathcal{A}_\ell$, given by 
\begin{equation}
(\nu^{(\ell)},\psi^{(\ell)}) \triangleq 
\frac{1}{|\mathcal{A}_\ell|}
\iint_{\mathcal{A}_\ell}
(\nu_c,\psi_c)\, d\nu_c\, d\psi_c,
\end{equation}
where $|\mathcal{A}_\ell| \triangleq \iint_{\mathcal{A}_\ell} d\nu_c\, d\psi_c$ denotes the Lebesgue measure of $\mathcal{A}_\ell$.
Then, we choose the $\ell$th RIS phase vector as 
$\boldsymbol{\phi}_\ell = \mathbf{a}_{\mathrm{RIS}}(\nu^{(\ell)},\psi^{(\ell)}) \in \mathcal{C}$.

Next, we characterize $f_{\nu_c,\psi_c}(\nu_c,\psi_c)$, the probability mass function (PMF) of the RIS phase vectors. Let $\{(\nu_{c,k}, \psi_{c,k})\}_{k=1}^{K}$ denote the cascaded azimuth-elevation angle pairs corresponding to the $K$ UEs. When the direct BS-UE link is blocked, the throughput-optimal random RIS phase configurations depend only on the cascaded angles across UEs~\cite[Sec~$4.2$]{Yashvanth_ICASSP_2023}. Accordingly,
\begin{equation}
    f_{\nu_c,\psi_c}(\nu_c,\psi_c)
=
\frac{1}{K}\sum\nolimits_{k=1}^{K}
\delta\big(\nu_c-\nu_{c,k}\big)\,
\delta\big(\psi_c-\psi_{c,k}\big).
\end{equation}
Using~\eqref{eq:RIS_density_general}, the PMF corresponding to the RIS vector $\boldsymbol{\phi}_\ell$ is
\begin{equation}
p_\ell =
\iint_{\mathcal{A}_\ell}
f_{\nu_c,\psi_c}(\nu_c,\psi_c)\, d\nu_c\, d\psi_c
=
\frac{1}{K}\sum_{k=1}^{K}
\mathbbm{1}_{\{(\nu_{c,k},\psi_{c,k}) \in \mathcal{A}_\ell\}}.
\end{equation}
 In other words, $p_\ell$ equals the fraction of UEs whose cascaded angles lie in the region $\mathcal{A}_\ell$.
Therefore, the sampling distribution $\mathcal{S}$ for the RIS phase configurations can be written as
\begin{multline}\label{eq_final_sampling_distribution}
\mathcal{S}
=
\Big\{
\big(\boldsymbol{\phi}_\ell, p_\ell\big)
\,\Big|\,
\boldsymbol{\phi}_\ell
=
\mathbf{a}_{\mathrm{RIS}}(\nu^{(\ell)},\psi^{(\ell)}), \\
p_\ell
=
\frac{1}{K}
\sum_{k=1}^{K}
\mathbbm{1}\big\{
(\nu_{c,k},\psi_{c,k}) \in \mathcal{A}_\ell
\big\},
\;
\ell = 1,\ldots,L
\Big\}.
\end{multline}
\begin{figure}[t]
    \centering
    \includegraphics[width=\linewidth]{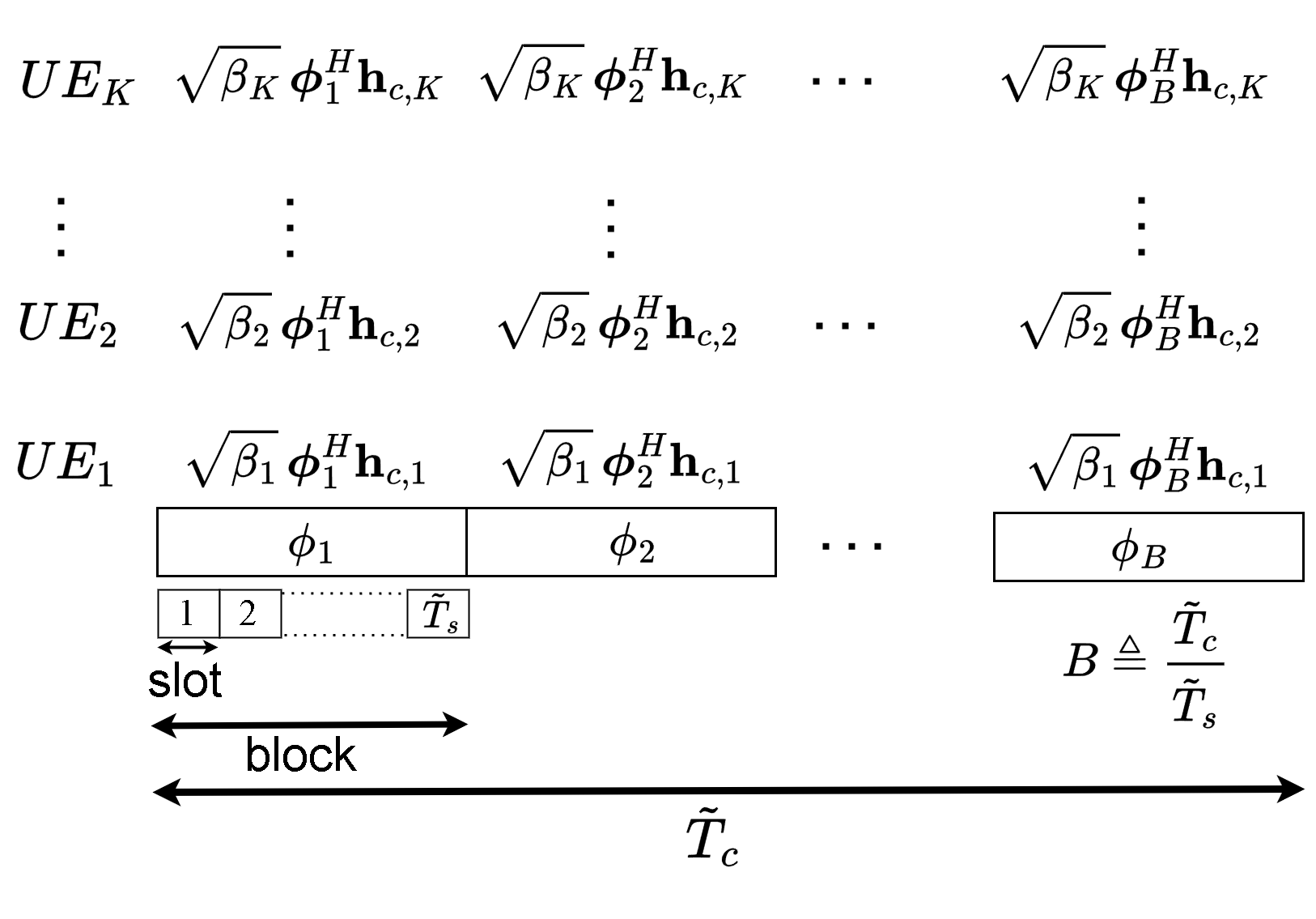}
    \caption{Timing structure. The RIS vector $\boldsymbol{\phi}_\ell$ remains fixed for $\tilde{T}_s$ slots, and independently draws a new random state in the subsequent $\tilde{T}_s$ slots.}
    \label{fig:scheduling_blocks}
\end{figure}
\subsection{Analysis of Randomized RIS Opportunistic Scheduling}
The RIS updates its configuration every $\tilde{T}_s \triangleq \frac{T_s}{\Delta t}$ time slots, where $\Delta t$ is the slot duration. In each block of $\tilde{T}_c$ slots, the RIS adopts a fixed random phase vector drawn from the sampling distribution $\mathcal{S}$ in~\eqref{eq_final_sampling_distribution}; see Fig.~\ref{fig:scheduling_blocks} for an illustration. Thus, during the $n$th block, the RIS configuration is
\begin{equation}
    \boldsymbol{\phi}(t) = \boldsymbol{\phi}_\ell, \qquad t \in [n\tilde{T}_s, (n+1)\tilde{T}_s - 1],
\end{equation}
where $\boldsymbol{\phi}_\ell$ is selected with probability $p_\ell$, $\boldsymbol{\phi}(t)$ denotes the RIS configuration at time slot $t$, and $n$ is the block index. Thus, $\tilde{T}_s = 1$ corresponds to per-slot RIS reconfiguration, whereas $\tilde{T}_s = \infty$ represents a fixed RIS configuration over all time.
\textcolor{black}{Now, the instantaneous throughput of UE--$k$ at time $t$, denoted by $R_k^{\tilde{\ell}}(t)$ when the RIS vector assumes $\boldsymbol{\phi}_{\tilde{\ell}}$ is given by
\begin{equation}
\label{eq:Rk_inst}
R_k^{\tilde{\ell}}(t)
=
\log_2\!\left(
1+
\frac{P}{\sigma^2}\,
\beta_k |\alpha_k|^2
\left|
{\boldsymbol{\phi}_{\tilde{\ell}}}^{H}
\mathbf{a}_{\mathrm{RIS}}(\nu_{c,k},\psi_{c,k})
\right|^2
\right),
\end{equation}
where the effective cascaded channel follows from \eqref{eq:gk_upa_az_el}.}
Let $R_k^{\mathrm{opt}}$ denote the \emph{optimal} throughput achieved when the RIS vector is optimized to maximize the throughput at UE-$k$, i.e.,
\begin{equation}\label{eq_opt_benchmark}
R_k^{\mathrm{opt}}
=
\log_2\!\left(
1+
\frac{P}{\sigma^2}\,
\beta_k |\alpha_k|^2
\left|
{\boldsymbol{\phi}^{\mathrm{opt}}_k}^{H}
\mathbf{a}(\nu_{c,k},\psi_{c,k})
\right|^2
\right),
\end{equation}
where $\boldsymbol{\phi}^{\mathrm{opt}}_k \!\!=\!\! \argmax\nolimits_{\boldsymbol{\phi}\in \{\boldsymbol{\phi}_1,\ldots,\boldsymbol{\phi}_L\}} \left|{\boldsymbol{\phi}}^H\mathbf{a}(\nu_{c,k},\psi_{c,k})\right|^2$\!.
We have the following result. 
\begin{theorem}\label{thm:infinite_Tc}
Consider an RIS-assisted $5$G system with $K$ UEs and $L$ RIS configuration states, where, in each block comprising $T_s \in (0,\infty)$~seconds, the RIS state is independently and randomly sampled according to~\eqref{eq_final_sampling_distribution}. The long-term DL system throughput achieved by a PF scheduler with EWMA window size $T_c >1 $, denoted by $R_{\mathrm{sys},T_c}^{(K)}$, satisfies
\begin{equation}\label{eq_thm_statement}
\lim_{T_c \to \infty}
R_{\mathrm{sys},T_c}^{(K)}
=
(1/K)\sum\nolimits_{k=1}^{K} R_k^{\mathrm{opt}},
\end{equation}
where $R_k^{\mathrm{opt}}$ is the optimal throughput at UE-$k$ as in~\eqref{eq_opt_benchmark}.
\end{theorem}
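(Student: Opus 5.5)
The argument rests on a standard property of PF scheduling: as $T_c\to\infty$, the EWMA averages $T_k(t)$ stop fluctuating and settle at deterministic values. Once that happens, the PF rule acts like a fixed weighted-max-rate rule. The plan is to identify those limiting values and show they make the PF scheduler serve each UE only under its best RIS state, for a $1/K$ fraction of the time.

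\textbf{Step 1 (model as a Markov chain).} Fix $T_s$, so the block length is $\tilde T_s$. The RIS state process $\{\boldsymbol{\phi}(t)\}$ is a stationary finite-state chain: it is i.i.d.\ across blocks, and within a block it is constant. Hence each rate process $R_k(t)=R_k^{\tilde\ell}(t)$ in~\eqref{eq:Rk_inst} takes finitely many values, driven by an ergodic chain with stationary law $\{p_\ell\}$. In particular, the rates are time-averaged over $\mathcal{S}$ regardless of $\tilde T_s$.

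\textbf{Step 2 (fluid limit).} Invoke the stochastic-approximation result for PF scheduling with EWMA step $1/T_c$ (Kushner--Whiting, Stolyar). As $T_c\to\infty$, the interpolated trajectory of $\mathbf{T}(t)$ converges to the ODE
\begin{equation*}
\dot T_k=\mathbb{E}_{\ell\sim\mathcal{S}}\big[R_k^{\ell}\,\mathbbm{1}\{k=\argmax\nolimits_j R_j^{\ell}/T_j\}\big]-T_k .
\end{equation*}
Block correlation of length $\tilde T_s<\infty$ only adds a bounded-memory Markov noise. That noise averages out at the slow $1/T_c$ time scale, which is exactly where $T_s<\infty$ is needed. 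The unique globally stable equilibrium $\mathbf{T}^\star$ maximizes $\sum_k\log T_k$ over the achievable rate region, and the long-term throughput converges to $\sum_k T_k^\star$.

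\textbf{Step 3 (identify the optimizer).} By~\eqref{eq_final_sampling_distribution}, each UE $k$ has its cascaded angle in some region $\mathcal{A}_{\ell(k)}$, and $\boldsymbol{\phi}_{\ell(k)}=\boldsymbol{\phi}^{\mathrm{opt}}_k$. Here I assume the codebook regions are fine enough that the centroid beam is the best codebook beam. This is a small gap; if needed I would instead define $\ell(k)$ as the argmax index and take the partition consistent with it. The state $\ell$ occurs with probability $p_\ell=|\{k:\ell(k)=\ell\}|/K$. Consider the allocation in which, whenever state $\ell$ is active, the $K p_\ell$ UEs mapped to $\ell$ share the slots equally. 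Each UE then gets rate $R_k^{\mathrm{opt}}$ for a fraction $p_{\ell(k)}\cdot\frac{1}{Kp_{\ell(k)}}=1/K$ of the time, so $T_k=R_k^{\mathrm{opt}}/K$.

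I then check the KKT conditions of $\max\sum\log T_k$ with multipliers $1/T_k=K/R_k^{\mathrm{opt}}$. In state $\ell$, the PF weight $R_j^{\ell}/T_j=K R_j^{\ell}/R_j^{\mathrm{opt}}$ is at most $K$, with equality exactly when $\ell$ is optimal for $j$. So the allocation is a maximizer, and any UE in a non-optimal state strictly loses. Strict concavity then gives uniqueness of $\mathbf{T}^\star$. Summing yields~\eqref{eq_thm_statement}.

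\textbf{Main obstacle.} The hard part is Step 2's justification with correlated blocks and the non-smooth argmax. Ties among UEs sharing a state make the drift discontinuous. I would handle this by treating the drift as a differential inclusion, following Stolyar's "gradient scheduling" argument, and then using the strict concavity of $\sum\log T_k$ as a Lyapunov function to obtain convergence of time averages.
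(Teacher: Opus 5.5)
Your proposal is correct, but it reaches the result by a different route than the paper.

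\textbf{Shared ingredient.} Both arguments rest on the same fact: as $T_c\to\infty$, PF maximizes $\sum_k \log T_k$ over the achievable throughput region. The paper simply cites this. You justify it through the Kushner--Whiting/Stolyar stochastic-approximation limit. That has the advantage of showing exactly where $T_s<\infty$ enters: the block noise must average out on the $1/T_c$ time scale. To place the block process in Stolyar's finite-state Markov framework, augment the RIS state with the within-block slot counter.

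\textbf{The paper's route.} After that shared step the paper argues indirectly. It writes the PF throughput as $\beta_\ell(K)\,Q_{k,\ell}\,R_k^{\tilde{\ell}}$ and compares it with a class-restricted auxiliary scheduler that achieves $p_\ell\,Q_{k,\ell}\,R_k^{\mathrm{opt}}$. The resulting log-utility sandwich becomes $\sum_\ell p_\ell \log\bigl(p_\ell/\beta_\ell(K)\bigr)\le 0$, which forces the class time-shares to satisfy $\beta_\ell(K)=p_\ell$. To get the within-class win probability $Q_{k,\ell}=1/K_\ell$, it then needs three further ingredients:
\begin{itemize}
\item i.i.d.\ $\mathrm{Exp}(1)$ fading;
\item an equal-path-loss-within-class assumption;
\item an ergodicity claim that $T_k=T_j$ within a class.
\end{itemize}

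\textbf{Your route.} You exhibit the equal-sharing allocation directly. You verify the first-order (KKT) optimality condition with weights $1/T_k = K/R_k^{\mathrm{opt}}$, and use strict concavity for uniqueness. This needs neither the fading distribution nor equal path losses. It also avoids the tension in the paper's proof between a re-drawn $|\alpha_k|^2$ and a fixed $R_k^{\mathrm{opt}}$.

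\textbf{Trade-offs.} The paper's decomposition gives a more interpretable picture of what PF does: class $\ell$ is served for a $p_\ell$ fraction of time, split evenly within the class. Your approach instead relies on heavier convergence theorems. It must also handle the ties that arise at equilibrium via a differential inclusion, as you already anticipate.

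\textbf{On your Step 3 concern.} The consistency issue you flag (region membership versus codebook argmax) is not a gap relative to the paper. Its proof tacitly makes the same assumption when it sets $T_{k,\mathrm{AS}}^{(K)} = p_\ell\,Q_{k,\ell}\,R_k^{\mathrm{opt}}$ for $k\in\mathcal{K}_\ell$.
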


\begin{proof}
See Appendix~\ref{appendix_proof_thm1}.
\end{proof}
We interpret Theorem~\ref{thm:infinite_Tc} as follows. As $T_c \to \infty$, the average throughput $T_k(t)$ in~\eqref{eq_PF_scheduler_Tk_update} evolves very slowly, and the PF scheduling decision at time $t$ is dominated by the instantaneous throughput $R_k(t)$. So, the scheduler opportunistically selects the UE with the strongest instantaneous channel, which corresponds to the one which experiences a near-optimal RIS configuration. Moreover, for any finite RIS switching interval $T_s$, all RIS configurations are explored over time, ensuring that each UE eventually experiences a favorable channel realization and is thus scheduled for transmission. As a result, the system effectively achieves near-optimal RIS gains that is obtained under RR-scheduling, despite using only randomized RIS configurations. In essence, the PF mechanism of a $5$G system enables randomized RIS to procure optimal performance without explicit RIS optimization. 
\subsection{How Large a PF Scheduling Window Size, $T_c$ is Sufficient?}
Recall that Theorem~\ref{thm:infinite_Tc} applies in the asymptotic regime where $T_c \to \infty$, i.e., when the scheduling horizon is sufficiently long for each UE’s throughput to converge to its steady-state long-term average. However, a large $T_c$ comes at the cost of increased UE latency and implicitly requires an unrealistic assumption that the channels remain static over large time durations. 
Consequently, it is pertinent to characterize how $T_c$ should scale with system parameters in order for PF scheduling with randomized RIS to achieve near-optimal throughput. To this end, we note that $T_c$ effectively determines the time window over which the PF scheduler can exploit \emph{opportunism} by prioritizing UEs with favorable instantaneous channel conditions. Beyond this window, the fairness component of the scheduler begins to dominate the scheduling decisions. Then, motivated by the proof of Theorem~\ref{thm:infinite_Tc}, we identify the following two aspects as being critical in the choice of $T_c$:
\begin{enumerate}[leftmargin=*]
    \item \textbf{Convergence of RIS distribution:} To fully exploit opportunism across the UEs, the empirical distribution of RIS states over $\tilde{T}_c$ slots, denoted by $\hat{\mathbf{p}}\triangleq[\hat{p}_1,\ldots,\hat{p}_L]^T$, where $\hat{p}_\ell$ is the fraction of time RIS state $\phi_\ell$ is selected, must be close to the stationary RIS distribution $\mathbf{p}\triangleq[p_1,\ldots,p_L]^T$ in~\eqref{eq_final_sampling_distribution} with high probability.
     \item \textbf{Convergence of per-UE scheduling frequency:} In order for all UEs to realize their long-term throughput under PF scheduling, each UE must be scheduled sufficiently many times over the window of $\tilde{T_c}$ slots. To this end, let $S_k(t)\in\{0,1\}$ denote the scheduling indicator for UE-$k$ at time $t$, where $S_k(t)=1$ if and only if UE-$k$ is scheduled at time $t$. Define the empirical scheduling frequency vector
        \begin{equation}\label{eq_emp_sched_freq_vector}
            \hat{q}_k(\tilde{T_c}) \triangleq \frac{1}{\tilde{T_c}}\sum\nolimits_{t=1}^{\tilde{T_c}} S_k(t),\text{ and } q_k \triangleq \mathbb{E}[S_k(t)] = \frac{1}{K}, \ \forall k.
        \end{equation}
    Then, suffices to choose $\tilde{T_c}$ such that the empirical scheduling frequency vector $\hat{\mathbf{q}}(\tilde{T_c})\triangleq\left[\hat{q}_1(\tilde{T_c}),\ldots,\hat{q}_K(\tilde{T_c})\right]^T$ is close to $\mathbf{q}\triangleq\left[q_1,\ldots,q_K\right]^T$ with high probability.
\end{enumerate}
We can formally capture the above two requirements using the
\emph{total variation (TV) distance} between two probability measures
$P$, $Q$ over a measurable space $(\Omega,\mathcal{F})$, defined as
\begin{equation}\label{eq_TVD_defn}
\|P-Q\|_{\mathrm{TV}}
\triangleq
\sup\nolimits_{A \in \mathcal{F}} |P(A)-Q(A)|.
\end{equation}
If the sample space $\Omega$ contains only finitely many elements, say $\Omega=\{1,2,\ldots,n\}$, then the associated probability measure vectors $P$ and $Q$ can be represented by their PMFs: $\mathbf{p}=[p_1,\ldots,p_n]^T$, and $\mathbf{q}=[q_1,\ldots,q_n]^T$,
where $p_i=P(\{i\})$ and $q_i=Q(\{i\})$.
Then \eqref{eq_TVD_defn} simplifies to
$
\|P-Q\|_{\mathrm{TV}}
=
\frac{1}{2}\sum\nolimits_{i=1}^n |p_i-q_i|.
$
Using this, let $\varepsilon_1,\varepsilon_2 > 0$ be accuracy parameters such that
\begin{equation}\label{eq_TVD_reqd}
    \|\hat{\mathbf{p}}-\mathbf{p}\|_{\mathrm{TV}} \le \varepsilon_1, \qquad
\|\hat{\mathbf{q}}-\mathbf{q}\|_{\mathrm{TV}} \le \varepsilon_2.
\end{equation}
Then, Theorem \ref{thm:finite_Tc} below, we characterize how to choose $\tilde{T}_c$ such that~\eqref{eq_TVD_reqd} holds with probability at least $1-\eta_1$ and $1-\eta_2$, respectively, where $\eta_1, \eta_2 \in (0, 1/2)$ are confidence parameters. We also derive a corresponding throughput guarantee.
\begin{theorem}\label{thm:finite_Tc}
Let $\varepsilon_1 $ and $\varepsilon_2$ be the accuracy parameters as in~\eqref{eq_TVD_reqd}, and $\eta_1,\eta_2 \in (0,1/2)$ be the corresponding confidence parameters. Let the long-term throughput vector of the $K$ UEs under PF scheduling be given by $\tilde{\mathbf{T}} \triangleq [\tilde{T}_1,\ldots,\tilde{T}_K]^T$, where $\tilde{T}_k \triangleq T_k(\tilde{T}_c)$, and $\mathbf{T}^*$ denote the optimal throughput vector given by $\mathbf{T}^* =
(1/K)
\,[R_1^{\mathrm{opt}},\ldots,R_K^{\mathrm{opt}}]^T$.
Under the setting considered in Theorem~\ref{thm:infinite_Tc}, if $\tilde{T}_c$ satisfies
\begin{equation}\label{eq:thm_reqd_Tc}
\tilde{T}_c \ge \max \Bigg\{\tilde{T_s}\cdot \frac{1}{2\varepsilon_1^2}
\ln \left( \frac{2L}{\eta_1} \right), \frac{1}{2\varepsilon_2^2}
\ln \left( \frac{2K}{\eta_2} \right)
\!\Bigg\},
\end{equation}
then, with probability at least $1-\eta_1-\eta_2$, the following holds:
\begin{equation}\label{eq:error_metric_normalized}
{\|\tilde{\mathbf{T}}-\mathbf{T}^*\|_\infty}\big/
{\|\mathbf{T}^*\|_\infty}
\le 2K(\varepsilon_1+\varepsilon_2).
\end{equation}
\end{theorem}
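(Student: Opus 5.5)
The plan is to split the argument into a concentration part, which gives the two conditions in \eqref{eq_TVD_reqd}, and a deterministic perturbation part, which turns those conditions into the throughput bound \eqref{eq:error_metric_normalized}. The two concentration events are handled separately and then combined with a union bound, which accounts for the probability $1-\eta_1-\eta_2$.

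\textbf{Step 1: RIS-state concentration.} Over a window of $\tilde{T}_c$ slots the RIS draws about $M\triangleq \tilde{T}_c/\tilde{T}_s$ independent states from $\mathcal{S}$. Within a block the state is constant, so the empirical slot-fraction $\hat{p}_\ell$ equals the empirical block-fraction, up to boundary effects that I will absorb by rounding. For each $\ell$, $\hat{p}_\ell$ is an average of $M$ i.i.d.\ Bernoulli$(p_\ell)$ variables. Hoeffding's inequality then gives
\[
\Pr\big(|\hat{p}_\ell-p_\ell|>\varepsilon_1\big)\le 2e^{-2M\varepsilon_1^2}.
\]
A union bound over the $L$ states yields $\max_\ell|\hat p_\ell-p_\ell|\le\varepsilon_1$ with probability at least $1-2Le^{-2M\varepsilon_1^2}$. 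This is at least $1-\eta_1$ exactly when $\tilde{T}_c\ge \tilde{T}_s\frac{1}{2\varepsilon_1^2}\ln(2L/\eta_1)$, which is the first term in \eqref{eq:thm_reqd_Tc}.

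\textbf{Step 2: scheduling-frequency concentration.} The indicators $S_k(t)$ are not independent, since the PF metric couples them in time. I would form the martingale differences $S_k(t)-\mathbb{E}[S_k(t)\mid\mathcal{F}_{t-1}]$ and apply Azuma--Hoeffding. The increments are bounded by $1$, so this gives deviation probability at most $2e^{-2\tilde{T}_c\varepsilon_2^2}$ per UE. A union bound over the $K$ UEs produces the second term in \eqref{eq:thm_reqd_Tc}.

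Replacing the conditional means by $q_k=1/K$ relies on the symmetric long-run fairness of PF that underlies Theorem~\ref{thm:infinite_Tc} ($q_k=\mathbb{E}[S_k(t)]=1/K$). \textbf{I expect this step to be the main obstacle.} The conditional scheduling probabilities are not exactly $1/K$ at every slot, and I will likely need to treat them as approximately stationary, or else absorb the drift into the constant. In both steps I would use per-coordinate (sup-norm) control in place of the TV distance; the lost factor of at most $K$ or $L$ is harmless because the final bound already carries a factor of $K$.

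\textbf{Step 3: deterministic throughput bound.} On the intersection of the two events I would write the long-term throughput of UE-$k$ as a weighted sum
\[
\tilde T_k\approx\sum_\ell \hat{w}_{k,\ell}R_k^{\ell},
\]
where $\hat w_{k,\ell}$ is the fraction of slots in which UE-$k$ is scheduled while the RIS is in state $\ell$. Following the proof of Theorem~\ref{thm:infinite_Tc}, in the ideal case this weight equals $1/K$ on the optimal state $\ell_k^\star$ and $0$ elsewhere, which gives $T_k^*=R_k^{\mathrm{opt}}/K$. The perturbation of the weights is controlled by $|\hat p_\ell-p_\ell|$ and $|\hat q_k-q_k|$. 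Since $p_\ell$ is a multiple of $1/K$, converting a deviation in $p_\ell$ into a deviation in the per-UE share costs a factor of at most $K$. This gives
\[
|\tilde T_k-T_k^*|\le 2K(\varepsilon_1+\varepsilon_2)\max_\ell R_k^{\ell}.
\]
Finally I would bound $\max_\ell R_k^\ell\le R_k^{\mathrm{opt}}$, which holds because $R_k^{\mathrm{opt}}$ is defined as the maximum over the codebook. Normalizing by $\|\mathbf{T}^*\|_\infty$ with a crude bound then yields \eqref{eq:error_metric_normalized}.
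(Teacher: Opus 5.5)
Your architecture is the same as the paper's. You use Hoeffding over the $\tilde{T}_c/\tilde{T}_s$ independent RIS draws with a union bound over $L$, then Azuma--Hoeffding on the indicators $S_k(t)$ with a union bound over $K$, then a deterministic perturbation bound. The obstacle you flag in Step~2 is not resolved by any extra idea in the paper either: it invokes the bounded-difference property and takes $q_k=1/K$.

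The genuine gap is how you account for the factor $K$ in Step~3. You arrive at $|\tilde{T}_k-T_k^*|\le 2K(\varepsilon_1+\varepsilon_2)R_k^{\mathrm{opt}}$. But $T_k^*=R_k^{\mathrm{opt}}/K$ and $\|\mathbf{T}^*\|_\infty=\max_k R_k^{\mathrm{opt}}/K$ exactly, so normalizing gives
\[
\frac{\|\tilde{\mathbf{T}}-\mathbf{T}^*\|_\infty}{\|\mathbf{T}^*\|_\infty}\le 2K^2(\varepsilon_1+\varepsilon_2),
\]
and no crude bound removes the extra $K$. The factor $K$ in \eqref{eq:error_metric_normalized} is not slack; the normalization uses it up entirely. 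In the paper the unnormalized deviation carries no $K$. It writes $\tilde{T}_k-T_k^*=\sum_\ell\big[(\hat{p}_\ell-p_\ell)\hat{q}_{k,\ell}+p_\ell(\hat{q}_{k,\ell}-q_{k,\ell})\big]R_{k,\ell}$ and uses $\hat{q}_{k,\ell},p_\ell\le 1$, $R_{k,\ell}\le R_k^{\mathrm{opt}}$, $\sum_\ell|\hat{p}_\ell-p_\ell|\le 2\varepsilon_1$ and $\sum_\ell|\hat{q}_{k,\ell}-q_{k,\ell}|\le 2\varepsilon_2$. This gives $|\tilde{T}_k-T_k^*|\le 2(\varepsilon_1+\varepsilon_2)R_k^{\mathrm{opt}}=2K(\varepsilon_1+\varepsilon_2)T_k^*$, so $K$ enters only through the division by $T_k^*$.

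Your claim that converting a deviation in $p_\ell$ into a deviation in UE-$k$'s share costs a factor $K$ is also off. Let $K_\ell$ be the number of UEs whose cascaded angles lie in region $\ell$. In absolute terms the share $p_\ell/K_\ell$ moves by $|\hat{p}_\ell-p_\ell|/K_\ell\le|\hat{p}_\ell-p_\ell|$. A factor $K$ appears only in relative terms, via $p_\ell\ge 1/K$, and then you must not pay it again when normalizing.

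The same accounting error undermines your plan to replace the TV conditions \eqref{eq_TVD_reqd} by per-coordinate control and absorb the loss into the $K$. The perturbation step needs the $\ell_1$ bounds above. A sup-norm guarantee $\max_\ell|\hat{p}_\ell-p_\ell|\le\varepsilon_1$ only yields $\sum_\ell|\hat{p}_\ell-p_\ell|\le\min(L,K)\,\varepsilon_1$, since states with $p_\ell=0$ are never drawn and at most $K$ states have $p_\ell>0$. After normalization this gives a term $K\min(L,K)\,\varepsilon_1$ instead of $2K\varepsilon_1$.

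The paper's perturbation step takes the $\ell_1$ bound directly from the TV form \eqref{eq_TVD_reqd}. Its Hoeffding-plus-union-bound step, like yours, literally certifies only the per-coordinate version, so that link is loose there too, but spending the $K$ does not fix it. To reach the stated constant, keep $\ell_1$ control in the perturbation step and spend the $K$ only once.

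You should also make explicit how the per-UE deviation $|\hat{q}_k-q_k|$ from Step~2 controls the per-state weights $\hat{w}_{k,\ell}$ you use in Step~3. Your proposal leaves this implicit; the paper bridges it by identifying $\sum_\ell|\hat{q}_{k,\ell}-q_{k,\ell}|$ with $|\hat{q}_k-q_k|$.
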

\begin{proof} 
See Appendix~\ref{appendix_proof_thm2}.
\end{proof}
\begin{figure}
\centering
\begin{subfigure}{0.48\linewidth}
    \centering
    \includegraphics[width=1.05\linewidth]{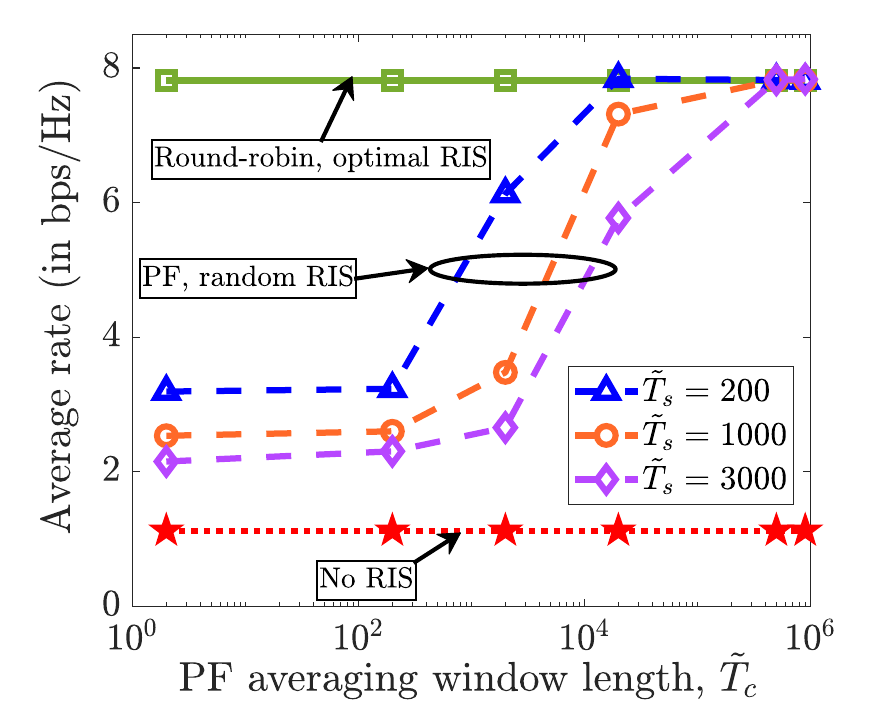}
    \caption{Throughput vs. $\tilde{T}_c$.}
    \label{fig:Thrm_1}
\end{subfigure}
\begin{subfigure}{0.48\linewidth}
    \centering
     \includegraphics[width =1.1\linewidth]{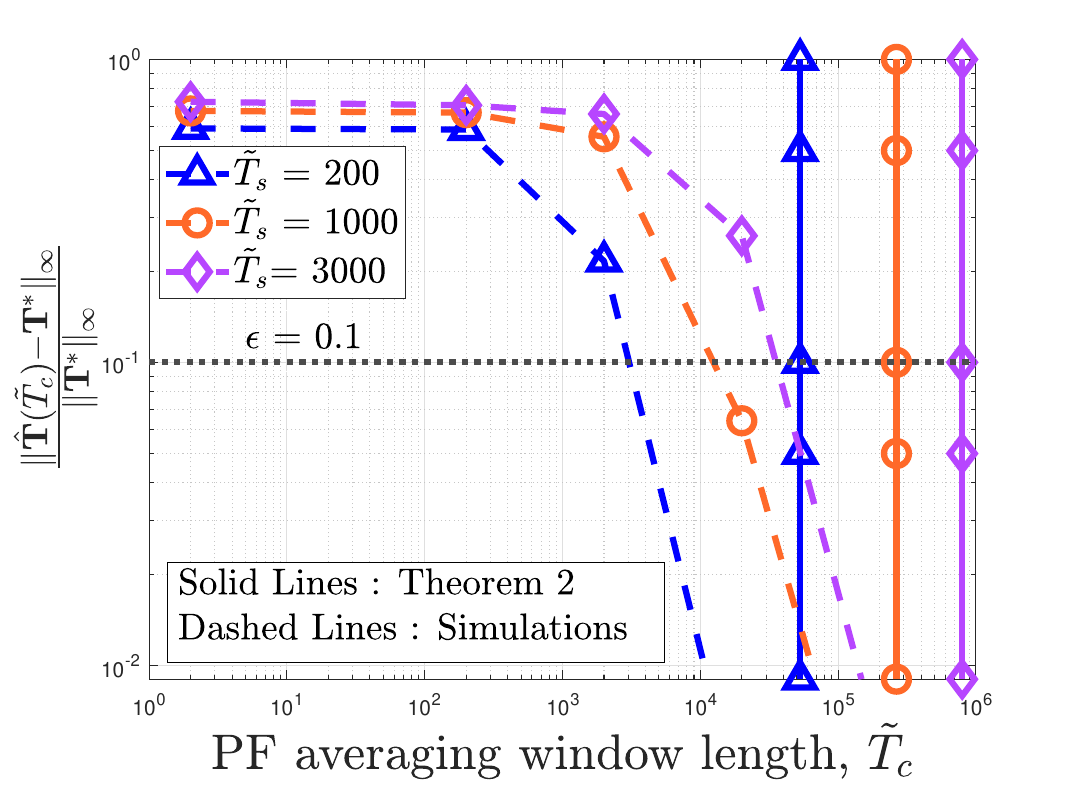}
    \caption{Optimality gap vs. $\tilde{T}_c$.}
    \label{fig:Thrm_2}
\end{subfigure}
\caption{Simulations validating the theory of RIS-assisted $5$G NR.}
\label{fig_simulations}
\end{figure}
From Theorem~\ref{thm:finite_Tc}, we deduce that, in order for the system to obtain most of the benefits of RIS, it is sufficient if $\tilde{T}_c$ scales only logarithmically in the number of RIS states $L$ and the number of UEs $K$, and this shows the favorable scalability of our scheme in practical wireless systems. Also, the linear dependence of $\tilde{T}_c$ on the RIS switching interval $T_s$ reflects the fact that larger values of $T_s$ result in fewer RIS state realizations within a given PF averaging window, thereby requiring a proportionally larger $\tilde{T}_c$ to ensure sufficient exploration of RIS configurations over its sample space.

\begin{remark}[Practical Design Implications]\label{remark:Choice_Ts}
Theorem~\ref{thm:finite_Tc} provides a quantitative guideline for selecting the PF averaging window $T_c$ relative to the RIS switching
interval $T_s$. Conversely, in practical deployments, a judicious choice of $T_s$ is also crucial: $T_s$ should be long enough to support $5$G NR DL control signaling, MCS adaptation, and data transmission to the UE currently benefiting from a near-beamforming RIS configuration, yet not so long that the PF scheduler (with a given EWMA parameter $T_c$) is compelled to serve other UEs to which the RIS may not deliver near-optimal gains. Thus, a judicious selection of the $(T_c,T_s)$ pair based on the $5$G NR scheduling and channel state feedback mechanisms in addition to following the bound in Theorem~\ref{thm:finite_Tc} is essential to fully realize the near-optimal benefits of randomized RIS. 
\end{remark}
\subsection{Simulation Results}
We now validate Theorems~\ref{thm:infinite_Tc} and~\ref{thm:finite_Tc}
via Monte Carlo simulations for an RIS-aided wireless system. A BS located at $(0,0)$~m communicates with $K=10$ UEs which are uniformly distributed over a $ [70,120] \times [-20,20]$~m grid  using an $N=256$-element RIS deployed at $(60,20)$~m. The path losses in all links follow a break-point model with reference distance path gain of $C_0=-30$~dB and path loss exponents $2.0$, $2.8$, and $3.8$ in the BS-RIS, RIS-UE, and BS-UE links, respectively \cite{Yashvanth_WCL_2025,3gpp_NR_path_loss}. The system operates at a transmit SNR of $110$~dB. The RIS employs $L=10$ phase states and the accuracy/confidence parameters as described in Theorem~\ref{thm:finite_Tc} are fixed to $\varepsilon=0.1$ and $\eta=0.1$.

In Fig.~\ref{fig:Thrm_1}, we plot the system throughput as a function of the PF averaging window length $\tilde{T}_c$ for different values of the RIS switching interval $\tilde{T}_s$, along with two baselines: (a) RR scheduling with optimal phase configuration for the selected user and (b) no-RIS. These two yield the highest and lowest throughput, respectively, and we note that (a) incurs significant training and feedback overhead compared to random RIS configuration, as one needs to find the optimal phase configuration for the user scheduled in each slot. 

For a given $\tilde{T}_s$, when $\tilde{T}_c$ is small, the throughput is well below RR with optimized phase configuration, since the PF scheduler is strongly constrained by prioritizing fairness and does not always transmit data to a UE to which the RIS provides its gain. However, notably, this throughput is still better than that achieved without an RIS, demonstrating that even a randomly configured RIS offers some gains to any user in its vicinity. As $\tilde{T}_c$ increases, the PF throughput improves by exploiting the opportunism in scheduling UEs. When $\tilde{T}_c$ is sufficiently large, the throughput converges to that of RR for any $\tilde{T}_s < \infty$, as expected from Theorem~\ref{thm:infinite_Tc}. 
 
Intuitively, this is because when $\tilde{T}_c$ is large and $\tilde{T}_s$ is finite, every RIS state is eventually realized over time, and by exploiting the opportunism of the PF scheduler, we can procure the near-optimal RIS benefits at every UE in the system. This validates our analytical result in Theorem~\ref{thm:infinite_Tc}. On the other hand, for a fixed and finite $\tilde{T}_c$, increasing $\tilde{T}_s$ reduces the throughput. A larger $\tilde{T}_s$ limits how quickly the RIS can explore its sample space within the EWMA window, thereby reducing its ability to exploit opportunism and multi-user diversity.

In Fig.~\ref{fig:Thrm_2}, we plot the throughput gap between randomly configuring the RIS and the RR benchmark, as a function of $\tilde{T}_c$. For any fixed $\tilde{T}_s$, the optimality gap decreases with $\tilde{T}_c$, in line with Fig.~\ref{fig:Thrm_1}. 
However, importantly, when $\tilde{T}_c$ is chosen according to Theorem~\ref{thm:finite_Tc}, the optimality gap lies well below the target threshold, $\varepsilon$, marked on the figure. This validates the sufficiency of the scaling of $\tilde{T}_c$ in Theorem~\ref{thm:finite_Tc}. 
\section{Randomized RIS-Assisted Opportunistic Scheduling in $5$G NR : Experiments}\label{Experiments}
\begin{figure}
    \centering
\includegraphics[width=\linewidth]
{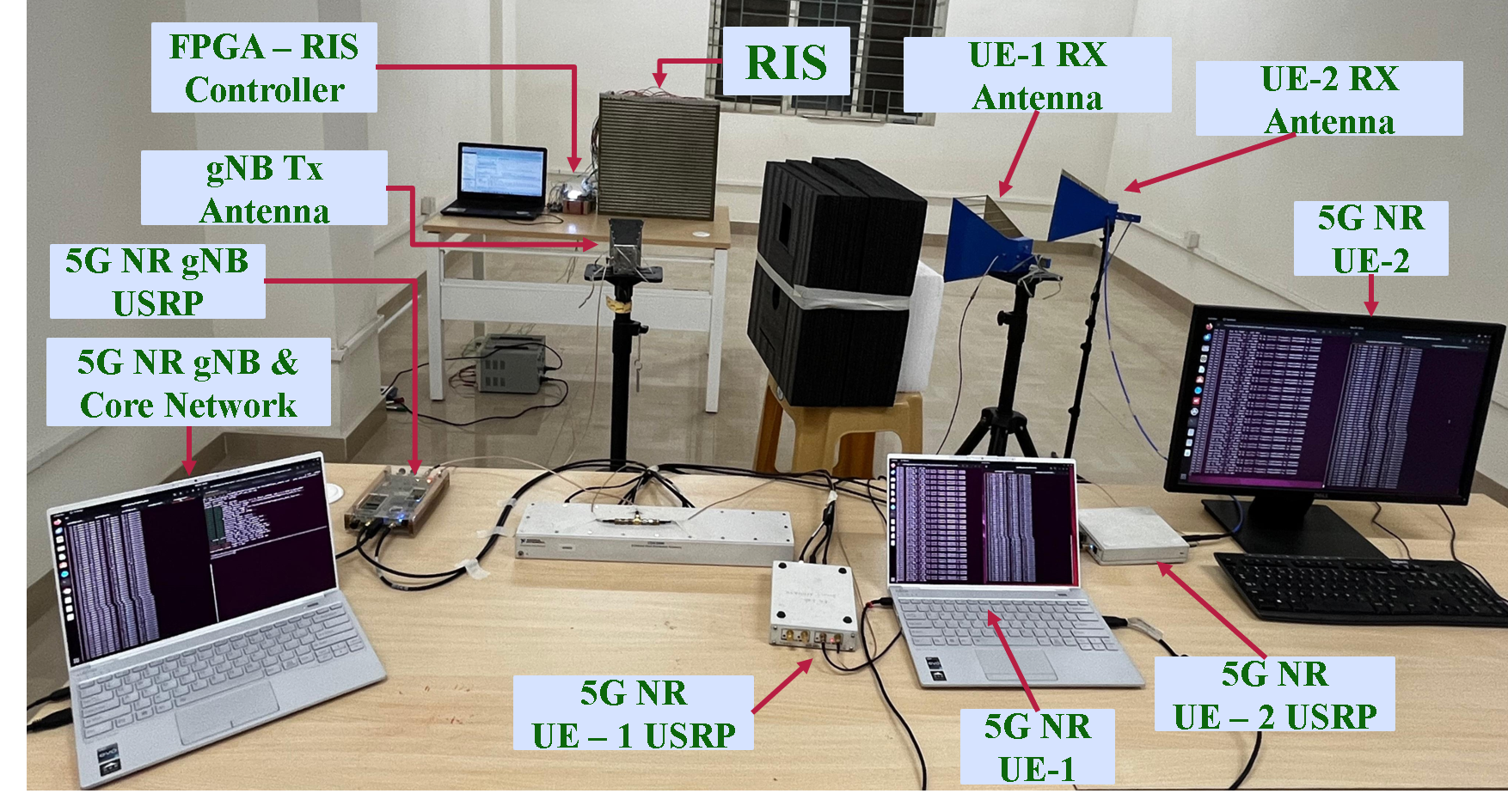}
\caption{Experimental setup used in this work. The system uses an OAI-based $5$G NR implementation with $1$ gNB and $2$ UEs connected to it via an RIS. A large block is used to obstruct the direct path from the gNB to the RIS.} 
    \label{fig:lab_setup}
\end{figure}
In this section, we present experimental results of a randomized RIS-assisted 5G NR system that implements real-time opportunistic scheduling of UEs connected to it. 
\subsection{Experimental Setup and Evaluation Methodology}\label{sec_data_collection}
Our experimental setup consists of a gNB and UEs; and the full protocol stack of 5G NR is implemented using OAI. USRP B$210$s serve as the radio units. 
External horn antennas were connected to the USRPs to provide additional directional gain to improve link reliability. The direct link between the BS and UEs is blocked to explicitly evaluate the impact of RIS on the performance. For simplicity and due to hardware availability, we consider $K=2$ UEs; the methodology extends naturally to more UEs also. The UEs are positioned such that $\nu_1 = 30^\circ$, $\nu_2 = 60^\circ$, and $\psi_1 = \psi_2 = 0^\circ$. The setup is shown in Fig.~\ref{fig:lab_setup}.

In line with Sec.~\ref{Sec_RIS_distribution}, the throughput-maximizing sampling distribution requires the RIS to operate across two phase-shift states~\cite[Theorem~1]{Yashvanth_WCL_2025} corresponding to the dominant angular directions of the two UEs. Specifically, the RIS phase configuration at time $t$ is sampled as
\begin{equation}\label{eq_sweeping_states}
\boldsymbol{\phi}(t) = 
\begin{cases}
\boldsymbol{\phi}_1 \triangleq \mathbf{a}(\nu_1)\otimes\mathbf{a}(\psi_1), & \text{with probability } \frac{1}{2},\\[4pt]
\boldsymbol{\phi}_2 \triangleq \mathbf{a}(\nu_2)\otimes\mathbf{a}(\psi_2), & \text{with probability } \frac{1}{2},
\end{cases}
\end{equation}
where $\mathbf{a}(\omega)$ is the array steering vector as defined in Sec.~\ref{Sec_RIS_distribution}. 
\begin{remark}
As the RIS states are chosen randomly at every switching interval, 
a UE attains optimal beamforming gain only when the selected RIS state aligns with its instantaneous channel. Further, a UE benefits from the RIS whenever the BS schedules it in a time slot where the RIS is pointing at it. The BS is oblivious of the RIS states, since the RIS is configured randomly and independently of the BS's operation. 
\end{remark}
For real-time transmission, the UEs request DL traffic using the transmission control protocol (TCP) over the $5$G NR PDSCH. Measurements are collected under two scenarios:
\begin{itemize}[leftmargin=*]
    \item \textbf{Single-UE scenario}: Only one UE is connected to the gNB at a time. For each UE, experiments are conducted both with and without the RIS. When the RIS is present, it is beamformed toward the corresponding UE under consideration. In all cases, time traces of the RSRP, BLER, MCS, and \texttt{iperf} throughput are recorded.

    \item \textbf{Scheduling scenario}: Both UEs are simultaneously connected and continuously request DL data from the gNB, which employs a PF scheduler to select the transmitting UE at each time slot. In this setup, the RIS switches \emph{randomly} between the two phase configurations defined in~\eqref{eq_sweeping_states}, with switching intervals of $T_s = 3$, $5$, and $15$~s. 
    For each experiment, the RSRP, BLER, MCS, and \texttt{iperf} throughput are recorded at both UEs with EWMA lengths: $T_c = 0.5, 1,$ and $10$ s. The default value for $T_c$ in OAI is $0.5$~s.
\end{itemize}
\subsection{Modifying the MCS Adaptation in OAI}\label{newmcs_adapt}
As discussed in Sec.~\ref{sec_MCS}, the OAI RAN performs MCS adaptation based on the channel state as inferred via the BLER. 
It initializes with a moderate MCS index and increases/decreases the MCS index by $1$ when the observed BLER is below/above $0.05$ and $0.15$, respectively. This results in very slow MCS updates as the BLER reacts to channel conditions only over several radio frames. It is not suitable for RIS-assisted systems, where channel conditions vary rapidly due to RIS switching. 

To address this, we modified the OAI codebase to enhance the default MCS adaptation by introducing a mechanism that uses UEs' reported RSRP to perform fast MCS adaptation. Since RSRP is consistently reported by the UE with a periodicity of $80$ ms, we first define a \emph{nominal MCS} based on the RSRP, denoted by $\text{MCS}_{\text{RSRP}}$, as follows. If the reported RSRP $> -102$~dBm, we set $\text{MCS}_{\text{RSRP}} = 25$; if the reported RSRP $< -115$~dBm, we set $\text{MCS}_{\text{RSRP}} = 10$; in all other cases we set we set $\text{MCS}_{\text{RSRP}} = 15$. We use these values because we find that the system typically does not support an MCS above $25$ at a BLER below $5\%$ even if the reported RSRP is above $-102$~dBm. Similarly, it does not support an MCS above $10$ at a BLER below $15\%$ if the reported RSRP is below $-115$~dBm. Thus, $\text{MCS}_{\text{RSRP}}$ forms a nominal maximum MCS around which the usual BLER-based MCS adaptation can be performed. During run-time, if the current MCS $-\  \text{MCS}_{\text{RSRP}} > 6$, we decrease the MCS by $6$; and if $\text{MCS}_{\text{RSRP}} \ -$ current MCS $> 6$, we  increase the MCS by $6$. In all other cases, we increase/decrease the MCS by $1$ depending on the observed BLER as explained above. This enables the BS to perform fast MCS switching, in line with the rapid channel variations induced by the RIS, in addition to the usual BLER-based MCS adaptation. In turn, it ensures that the scheduling decisions are better aligned with the real-time channel conditions and improves the throughput.
\subsection{Quantifying the Benefits of the RIS}
We begin by quantifying the benefits of deploying an RIS. We then analyze the temporal behavior of key performance metrics, including RSRP, BLER, MCS, and \texttt{iperf} throughput, under randomized RIS configurations and PF scheduling of UEs.
\begin{table}[t!]
\centering
\captionof{table}{RSRP and throughput performance with and without RIS.}
\begin{tabular}{|c|cc|cc|}
\hline
\multirow{2}{*}{\begin{tabular}[c]{@{}c@{}} \textbf{RIS} \\ \textbf{state} \end{tabular}} & \multicolumn{2}{c|}{\textbf{RSRP (in dBm)}}         & \multicolumn{2}{c|}{\textbf{\begin{tabular}[c]{@{}c@{}}Throughput \\ (in Mbps)\end{tabular}}} \\ \cline{2-5} 
& \multicolumn{1}{c|}{UE-$1$}    & UE-$2$    & \multicolumn{1}{c|}{UE-$1$}                              & UE-$2$                             \\ \hline
Without RIS                                                                                     & \multicolumn{1}{c|}{$-110$} & $-110$ & \multicolumn{1}{c|}{$25.2$}                             & $26.3$                            \\ \hline
With RIS                                                                                        & \multicolumn{1}{c|}{$-100$} & $-97$ & \multicolumn{1}{c|}{$51.1$}                             & $55.9$                            \\ \hline
\end{tabular}
\label{table_with_without_RIS}
\end{table}
Table~\ref{table_with_without_RIS} summarizes the average RSRP values observed at both UEs under the single-UE scenario, with and without the RIS. 
The RSRP enhances by $10$--$13$~dB, which corresponds to more than a  $10\times$ increase in received signal power, and this is attributed to the beamforming gain provided by the RIS. Moreover, since the RSRP directly impacts throughput, both UEs witness a clear boost in throughput when the RIS is present. In the absence of the RIS, the system operates at an average MCS index of $12$, and the corresponding data rate of about $25-26$~Mbps is close to the theoretical peak data rate of $26.7$~Mbps for that MCS. Similarly, in the presence of the RIS, the system operates at an average MCS index of $22$, and the data rate of about $53$~Mbps is in line with the theoretical peak data rate of $61.4$~Mbps.\footnote{See https://5g-tools.com/5g-nr-throughput-calculator/}
These experimental findings, obtained using a physical RIS integrated with an OTA $5$G NR testbed, demonstrate the practically achievable gains in RIS-aided communications.\footnote{The dataset collected during our experimental measurement is available here: \href{https://github.com/avreddy453/5G_RIS_Experiment_Data/tree/master}{https://github.com/avreddy453/5G\_RIS\_Experiment\_Data.git}}
\subsection{Performance of Randomized RIS with PF scheduling}
\begin{figure*}[hbt!]
\centering
\begin{subfigure}{.475\linewidth}
\centering
  \includegraphics[width=9cm]{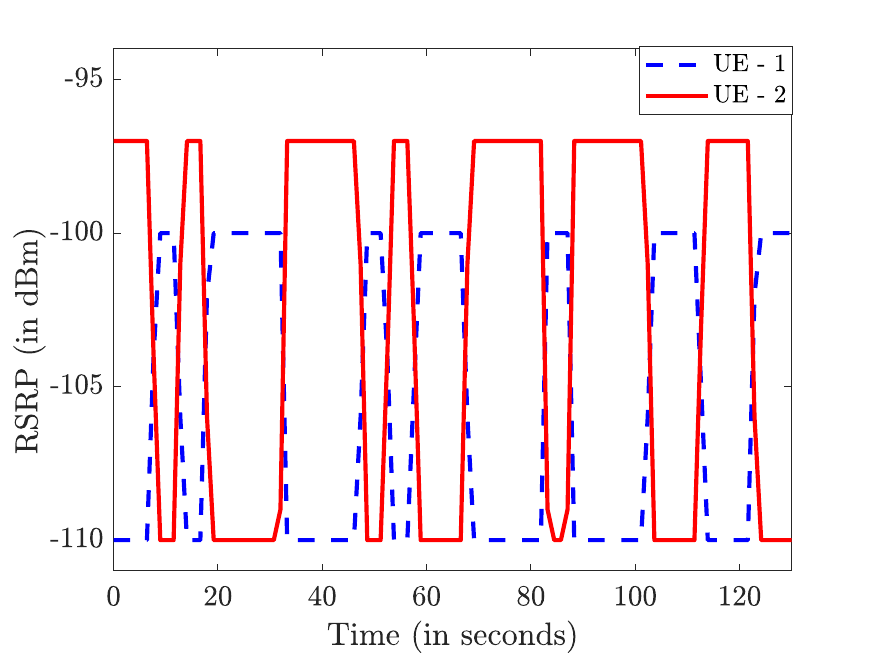}
  \caption{RSRP at the two UEs.}
  \label{fig_RSRP}
\end{subfigure}
\hspace{0.2cm}
\begin{subfigure}{.475\linewidth}
\centering
  \includegraphics[width=9cm]{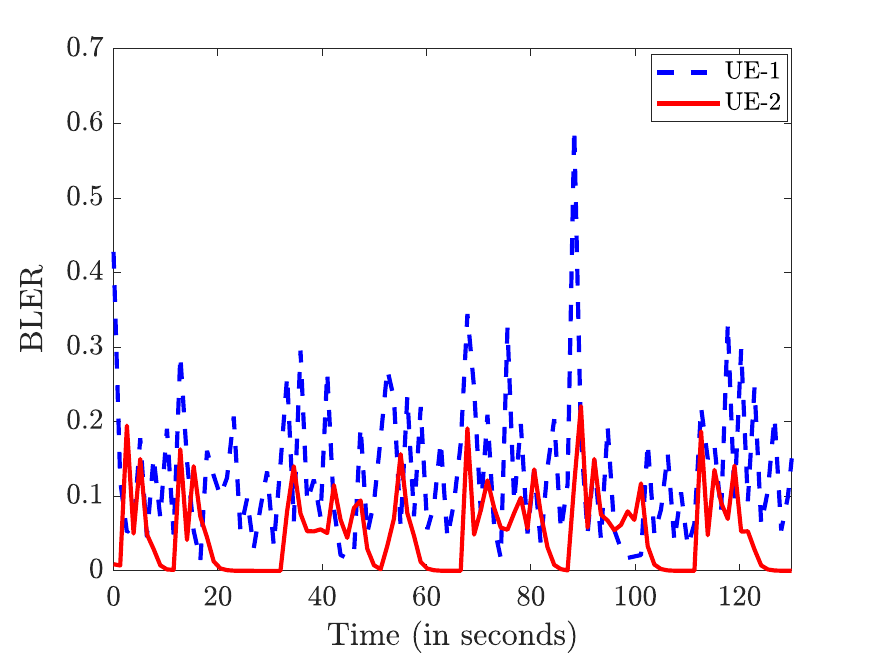}
  \caption{BLER at the two UEs.}
  \label{fig_BLER}
\end{subfigure}
\begin{subfigure}{.475\linewidth}
\centering
  \includegraphics[width=9cm]{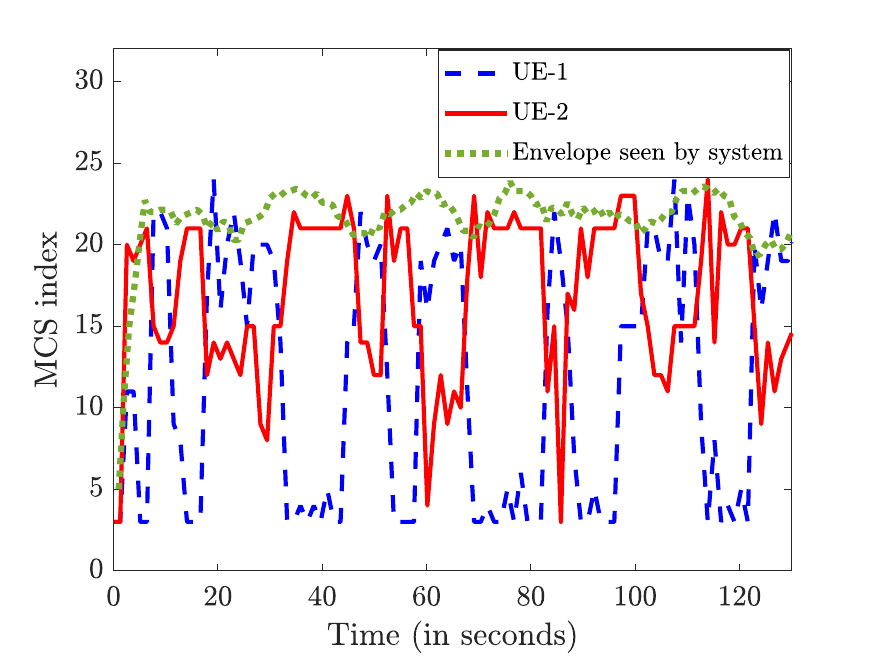}
  \caption{MCS indices for the two UEs.}
  \label{fig_MCS}
\end{subfigure}
\hspace{0.2cm}
\begin{subfigure}{.475\linewidth}
\centering
  \includegraphics[width=9cm]{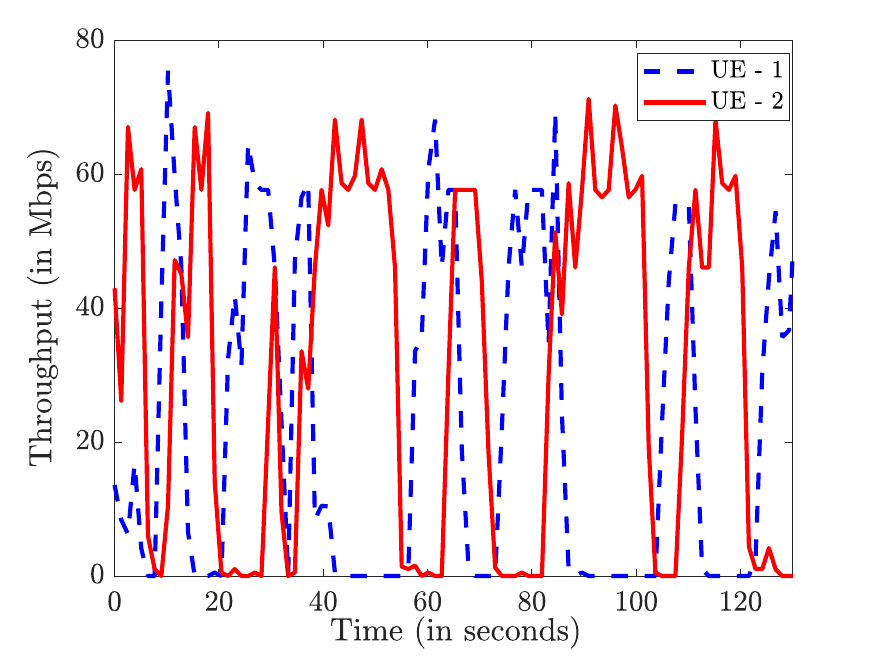}
  \caption{Throughput at the two UEs.}
  \label{fig_Trpt}
\end{subfigure}
\caption{Variation of performance metrics of a randomized RIS-aided $5$G NR system using a PF scheduler with $T_c=10$~s, and $T_s=5$~s.}
\label{fig:Time_trace}
\end{figure*}
We now examine the variation of performance metrics described in Sec.~\ref{sec_data_collection} under the scheduling scenario. Specifically, Fig.~\ref{fig:Time_trace} illustrates the temporal evolution of RSRP, BLER, MCS index, and throughput for both UEs when a PF scheduler with an averaging window of $T_c = 10$~s is employed. The RIS operates with a switching interval of $T_s = 5$~s, which was identified as the most effective configuration during experimentation. 
The rationale behind selecting this switching duration, along with the observed trade-offs associated with other system parameters, is discussed in Sec. \ref{tradeoff_ts}.
\subsubsection{Variation of RSRP}
Fig.~\ref{fig_RSRP} shows the variation of RSRP at both UEs over time. The RSRP for each UE alternates between high and low levels: $-100$~dBm and $-110$~dBm for UE-$1$, and $-97$~dBm and $-110$~dBm for UE-$2$. This behavior results from the RIS switching its phase configuration between the beamforming directions of the two UEs, as defined in~\eqref{eq_sweeping_states}. Since the RIS states are randomly chosen, the durations for which a UE experiences high or low RSRP vary, leading to different duty cycle lengths. When one UE experiences a high RSRP, the other UE has a low RSRP, and vice-versa.
\subsubsection{Variation of BLER}
In Fig.~\ref{fig_BLER}, we present the variation of BLER experienced by both UEs over time. The MCS adaptation based on BLER ensures that the BLER remains around the target value of $0.1$, but there are occasional excursions above/below the target value when the RIS switches away from/toward a scheduled UE, as expected. The performance of UE2 is generally better than that of UE1, possibly because of minor hardware differences in the setup. In addition, there are periods of time when the BLER of UE2 remains close to zero: ordinarily, the MCS adaptation should increase the MCS index to bring the BLER back up to $10\%$. However, these time periods are precisely the durations when the RIS is pointing towards UE1, causing UE1 to be predominantly scheduled. Since the MCS adaptation is based on calculating the BLER by averaging over about $100$ transport blocks, the MCS adaptation is slow during these time intervals.

\subsubsection{Variation of MCS Indices}
Fig.~\ref{fig_MCS} illustrates the temporal evolution of the MCS indices at the UEs. As discussed in Sec.~\ref{newmcs_adapt}, the selection of the MCS index during an RIS state transition at a given UE depends on the instantaneous channel quality (RSRP) between the gNB and that UE. From there on, the gNB dynamically varies the MCS based on the observed BLER. 
When the BLER increases beyond $0.15$, which typically occurs when the RSRP decreases from a high to a low level, the MCS index drops. This lowers the number of information bits transmitted per symbol, thereby improving decoding reliability and bringing the BLER back toward the target level. Similarly, when the BLER falls below $0.05$ (i.e., when the RSRP transitions from low to high), the gNB raises the MCS index, increasing the throughput while maintaining the desired reliability. During time intervals with no change in RSRP, the MCS index is relatively stable at a value that best matches the prevailing channel quality of the scheduled UE.

In Fig.~\ref{fig_MCS}, we also show the envelop of the MCS indices, which represents the effective MCS attained by the PF scheduler from a system perspective. Notably, even though the RIS configurations are selected randomly, the gNB consistently operates at relatively higher MCS levels. This highlights the effectiveness of the inherent opportunistic scheduling mechanism in $5$G, which prioritizes data transmission to a UE when it experiences favorable channel conditions.
\subsubsection{Variation of Throughput}
Fig.~\ref{fig_Trpt} presents the time evolution of the throughput achieved by the UEs at the application layer. Since a higher MCS index corresponds to transmitting more information bits per symbol, it directly translates to a higher throughput. Accordingly, consistent with Fig.~\ref{fig_MCS}, we observe that whenever the randomly configured RIS happens to direct its beam toward a given UE, that UE reports a high RSRP and experiences low BLER, resulting in the PF scheduler prioritizing data transmission to the UE and increasing its MCS, leading to a higher throughput.  
This is precisely the well-known \emph{multi-user diversity} effect, wherein the scheduler leverages the channel conditions across users to enhance overall performance. Consequently, as will be demonstrated in Fig.~\ref{fig_system_trpt}, this adaptive scheduling mechanism enables the system to achieve a substantially higher aggregate throughput, comparable to that obtained via optimized RIS that is tuned to point at the scheduled UE in every time slot.
\subsection{Randomized RIS Can Nearly Procure Full Benefits}\label{Sec_numerical_final}
In this section, we show that a randomized RIS, when paired with an opportunistic scheduler at the gNB, can achieve nearly the maximal throughput associated with optimized RIS configuration. Fig.~\ref{fig_system_trpt} depicts the EWMA system throughput (see \eqref{eq_PF_scheduler_Tk_update}) as a function of the PF window length $T_c$. 
The topmost curve corresponds to RR scheduling of UEs with the RIS optimally configured toward the scheduled UE in each slot. This genie-aided curve is obtained by averaging the single-UE throughput values with RIS beamforming to the considered UE (see Table~\ref{table_with_without_RIS}); it is independent of $T_c$ and represents the maximum achievable throughput with an optimized RIS. Practically, achieving this throughput requires tight control of the RIS, configuring it based on the scheduled UE. For comparison, we also include the case of RR scheduling with \emph{random} RIS switching. This curve appears below the PF scheduling curve, since RR does not exploit instantaneous channel variations and the consequent multi-user diversity. When the RIS switches randomly according to~\eqref{eq_sweeping_states} and the gNB employs PF scheduling, the system behavior is influenced by the averaging window $T_c$. For small $T_c$, the scheduler prioritizes short-term fairness, allocating resources more uniformly across users irrespective of the channel state and thereby reducing instantaneous throughput. As $T_c$ increases, the scheduler becomes more opportunistic and selects the UE that instantaneously benefits from the beamforming gain offered by the RIS. 
As a result, the PF throughput curve lies above the RR-with-random-RIS baseline and gradually approaches the genie-aided optimized RIS throughput as $T_c$ becomes large, confirming the opportunistic scheduling gain predicted in~\cite{Yashvanth_TSP_2023}. \emph{Importantly, the figure demonstrates that the throughput achieved with a randomized RIS and a properly tuned PF scheduler at the gNB not only consistently surpasses that of the no-RIS case (under both RR and PF scheduling), but can also deliver the full benefits of an optimized RIS even without explicitly optimizing the RIS.}

To further substantiate the above, Fig.~\ref{fig_histogram} shows a histogram of the fraction of time slots in which the PF scheduler selects a UE for transmission when the RIS is aligned with that UE’s channel and when it is not, based on a single long time trace. The scheduler predominantly transmits data to a UE when the RIS is in the beamforming configuration for that UE, but occasional instances occur where a UE is scheduled even when the RIS is not aligned toward it. This is because 5G NR prioritizes retransmission to a connected UE over new scheduling decisions when a NACK is received for a particular TB, resulting in the UE being scheduled even when it does not have the highest PF metric. Also, the UEs are scheduled almost equally over time, with each receiving about $40\%$ of the total time slots during which the RIS focuses its energy towards them, which depicts the long-term fairness in UE scheduling. 
\begin{figure*}
\centering
    \begin{subfigure}{0.49\linewidth}
    \centering
    \includegraphics[width=0.7\linewidth]{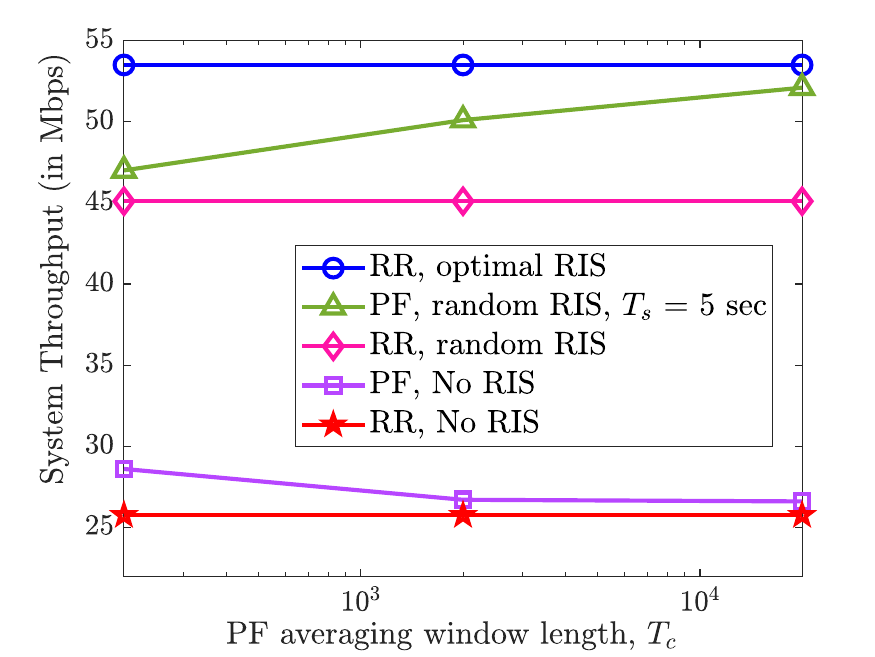}
    \caption{Throughput vs. $T_c$.}
    \label{fig_system_trpt}
     \end{subfigure}
    \begin{subfigure}{0.49\linewidth}
    \centering
    \includegraphics[width=0.9\linewidth]{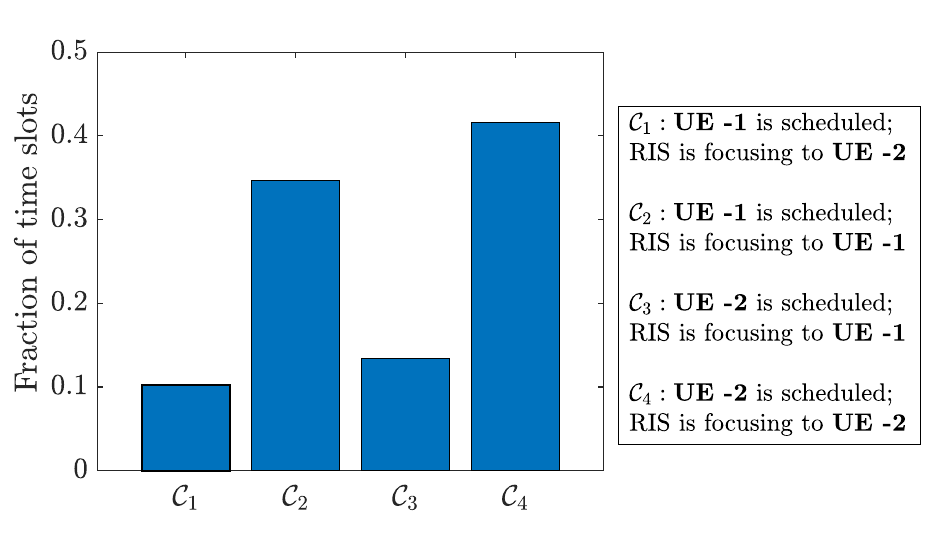}
    \caption{Fraction of time slots for four scheduling cases with $T_c = 10$ s and $T_s = 5$ s.} 
    \label{fig_histogram}
    \end{subfigure}
    \caption{Randomized RIS improves the throughput in real-time OTA $5$G NR systems.}
\label{fig_experimental_results}
\end{figure*}
\subsection{Tradeoff between $T_s$ and $T_c$}\label{tradeoff_ts}
In Table~\ref{table_Tc_Ts_tradeoff}, we report the throughput values obtained from experiments conducted with switching times $T_s \in \{1, 3, 5, 9, 15\}$~s and EWMA average window sizes $T_c \in \{0.1, 1, 10\}$~s. These are the same as the values used in the simulation setup of Fig.~\ref{fig_system_trpt}, but with the durations $T_s$ and $T_c$ measured in seconds. From Theorem~\ref{thm:finite_Tc}, at a fixed $T_c$, a smaller value of $T_s$ provides higher throughput, as it allows the RIS to explore more states within the throughput averaging window of $T_c$~s. On the other hand, if $T_s$ is very large,  the system fails to sufficiently explore independent RIS states within $T_c$ duration, and even a UE which does not witness favorable gain from the RIS gets scheduled for data transmission, limiting the achievable opportunistic gains. This is indeed the case in Table~\ref{table_Tc_Ts_tradeoff} as $T_s$ is reduced from $15$~s, for all three values of $T_c$. However, when $T_s$ becomes too small, there are overheads involved in MCS adaptation and other control steps, so the system has too little time to reach and operate at the optimal MCS corresponding to the RIS configuration.\footnote{Since the MCS is increased/decreased by $1$ based on the BLER computed using an averaging window of $100$~ms, setting $T_s$ to, say, $1$~s only allows for MCS adaptation $10$ times, which may not be sufficient to reach the optimal MCS before the RIS gets reconfigured to a different state.} 
Thus, an intermediate value of $T_s$ provides the best balance: it should be long enough to support optimal MCS selection under favorable RIS alignment at the scheduled UE, yet short enough to ensure that multiple RIS reconfigurations occur within $T_c$, thereby sufficiently exploiting the multi-user diversity. 
Hence, a switching interval of $T_s=5$ to $9$~s  achieves the best throughput in our setup. Furthermore, since the two users are scheduled with roughly equal frequency (see, e.g., Fig.~\ref{fig_histogram}), the scheme is also fair.

\begin{table}[t!]
\centering
\captionof{table}{Impact of EWMA averaging window $T_c$ and switching interval $T_s$ on the throughput.}
\label{table_Tc_Ts_tradeoff}
\resizebox{\columnwidth}{!}{%
\begin{tabular}{|c|c|c|c|c|c|}
\hline
\multirow{2}{*}{$T_c$ (s)} 
& \multicolumn{5}{c|}{\textbf{Throughput (Mbps)}} \\ \cline{2-6} 
& $T_s$ = 1 s & $T_s$ = 3 s & $T_s$ = 5 s & $T_s$ = 9 s & $T_s$ = 15 s \\ \hline
0.1 & 41.1 & 39.5 & 47.0 & 47.9 & 40.3 \\ \hline
1   & 41.0 & 41.6 & 50.1 & 48.2 & 42.9 \\ \hline
10  & 46.1 & 46.7 & 52.1 & 52.6 & 44.7 \\ \hline
\end{tabular}}
\end{table}
\subsection{Randomized RIS Helps Everywhere}
\begin{figure}
    \centering
    \includegraphics[width=\linewidth]{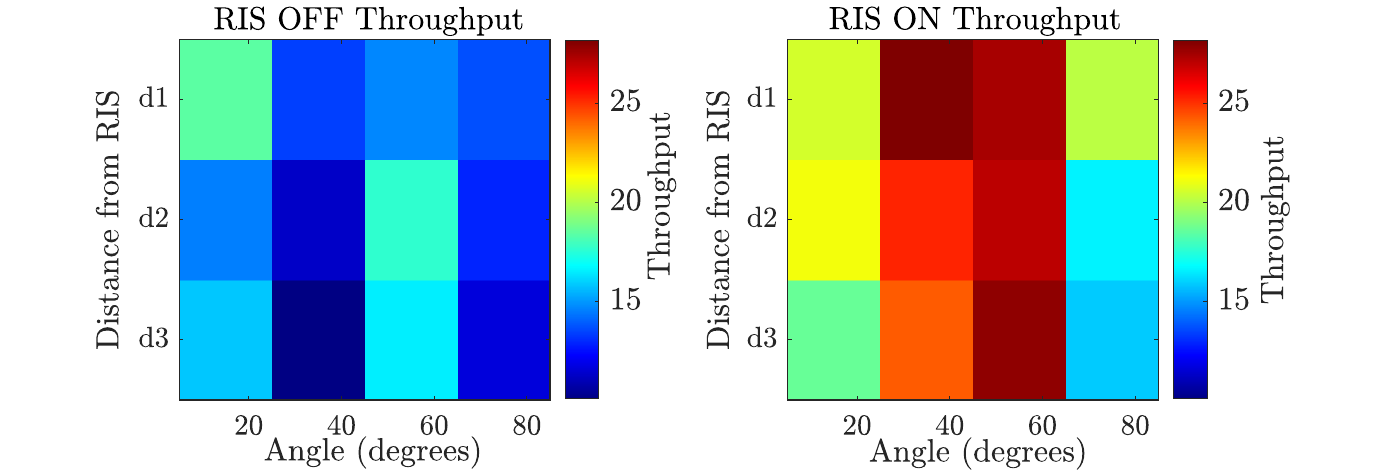}
    \caption{Throughput heatmap with and without RIS.}
    \label{fig_heatmap_randomRIS}
\end{figure}
In this final subsection, we show that even when RIS configurations are not randomly sampled according to beamforming-aligned directions for the UEs, the system still benefits from RIS states. We set $T_s = 5$~s and $T_c = 10$~s, and perform measurements for different UE angular locations and distances from the RIS under both RIS OFF and RIS ON operation. The experiments include the following scenarios: $(i)$ both UEs aligned with the RIS beamforming directions, $(ii)$ only one UE is aligned with an RIS beamforming direction, and $(iii)$ neither UE aligned with the RIS beamforming directions. 

For UE-1, the considered angular locations are $15^\circ$ and $30^\circ$, where $30^\circ$ corresponds to an RIS beamforming direction. Similarly, for UE-2, the considered angular locations are $75^\circ$ and $60^\circ$, where $60^\circ$ corresponds to an RIS beamforming direction. Accordingly, four UE angular placement configurations are considered:
$
00:(15^\circ,75^\circ), 
01:(15^\circ,60^\circ),
10:(30^\circ,75^\circ), 
11:(30^\circ,60^\circ).
$
For each angular configuration, the UE distances are varied over $d_1=1.22$~m, $d_2=1.52$~m, and $d_3=1.82$~m, while keeping the UE angles fixed. The RIS randomly switches between the $30^\circ$ and $60^\circ$ beamforming directions when ON. 

Fig.~\ref{fig_heatmap_randomRIS} shows the throughput for different UE angular locations and distances from the RIS under RIS OFF and RIS ON operation. The x-axis represents the angular location of an individual UE, while the y-axis represents the corresponding UE distance from the RIS. The throughput corresponding to a particular UE angle and distance is obtained by averaging over all configurations containing that angle. For example, for UE-1 located at $15^\circ$, the throughput values from the $00$ and $01$ configurations are averaged. 

When the RIS is OFF, comparatively lower throughput values are observed across most angle--distance combinations due to the absence of reflected beamforming gain. When the RIS is ON, the RIS randomly switches between beam directions centered around $30^\circ$ and $60^\circ$. Consequently, UEs located near these angular regions experience stronger reflected signals and achieve higher throughput values. Although the highest gains are observed near the RIS beamforming directions, noticeable improvements are also obtained for non-aligned UE locations due to the PF scheduler exploiting multi-user diversity by opportunistically serving the UE with favorable instantaneous channel conditions. Hence, randomized RIS operation provides throughput improvements throughout the considered coverage region, although the gains are smaller than those achieved by the configuration shown in Fig.~\ref{fig_system_trpt}.
\section{Conclusion}
In this work, we experimentally demonstrated the performance benefits of integrating a physically built RIS with an OTA real-time 5G system. We  showed that even randomly configured RIS can achieve near-optimal performance, both in theory and real-time experiments. By exploiting the inherent opportunistic scheduling mechanisms of 5G NR, the system naturally benefits from the channel variations induced by the RIS, without requiring explicit optimization of its configuration, thereby enabling a low-complexity deployment. A comprehensive analysis of RSRP, BLER, MCS, and throughput metrics further confirmed the effectiveness of randomized RIS configurations. In addition, an analytical study of the interaction between the RIS switching interval and the EWMA parameter used in the scheduler revealed its impact on the achievable system throughput. Our results further showed that a randomized RIS consistently provides performance gains compared to the baseline scenario without the RIS. Although the experiments were conducted in an indoor single-antenna setup with two users, the concept can readily extend to practical multi-antenna systems with a larger number of users. Evaluating randomized RIS performance in large-scale networks is part of our ongoing work.
\appendices
\section{Proof of Theorem 1}
\label{appendix_proof_thm1}
Recall that the angular domain is tiled into $L$ regions
$\{\mathcal{A}_\ell\}_{\ell=1}^L$, where each region is associated with a RIS phase configuration $\boldsymbol{\phi}_\ell$. Then,  
the set of UEs whose cascaded channel angles fall within region
$\mathcal{A}_\ell$ is given by
$
\mathcal{K}_\ell
=
\left\{
k \in \{1,\ldots,K\} :
(\nu_{c,k},\psi_{c,k}) \in \mathcal{A}_\ell
\right\}.
$
Accordingly, the fraction of UEs lying in region $\mathcal{A}_\ell$ is 
\begin{equation}
\psi_\ell(K)
=
{1}\big/{K}
\sum\nolimits_{k=1}^{K}
\mathbbm{1}\!\left\{
(\nu_{c,k},\psi_{c,k}) \in \mathcal{A}_\ell
\right\} \stackrel{(a)}{=} p_{\ell},
\end{equation}
where $(a)$ follows from~\eqref{eq_final_sampling_distribution}. 
So, the cardinality of $\mathcal{K}_\ell$ is $|\mathcal{K}_\ell| = p_\ell\, K$.
Now, suppose the PF scheduler operating as per
\eqref{eq:PF_metric}–\eqref{eq_PF_scheduler_Tk_update}, and that the index of the RIS state is $\tilde{\ell}$. Then, within class $\mathcal{K}_\ell$, the probability that UE-$k$ procures the maximum PF metric is 
\begin{equation}\label{eq:Qkl_def}
Q_{k,\ell}
\triangleq
\Pr\!\left(
{R_k^{\tilde{\ell}}(t)}\big/{T_k(t)}
>
{R_j^{\tilde{\ell}}(t)}\big/{T_j(t)},
\;
\forall j \in \mathcal{K}_\ell,\ j \neq k
\right),
\end{equation}
and this indicates the likelihood that UE-$k$ is scheduled among UEs in
$\mathcal{K}_\ell$ when the RIS is in configuration $\tilde{\ell}$. As a consequence, the average throughput achieved by UE-$k \in \mathcal{K}_\ell$
under PF scheduling, denoted by $T_{k,\mathrm{PF}}^{(K)}$, is given by
\begin{equation}\label{eq:pf_avg_throughput}
T_{k,\mathrm{PF}}^{(K)}
=
\beta_\ell(K)\,Q_{k,\ell}\,R_k^{\tilde{\ell}},
\end{equation}
where $\beta_\ell(K) \in [0,1]$ is the fraction of time allocated 
by the PF scheduler to UEs in class $\mathcal{K}_\ell$, which satisfies
$\sum_{\ell=1}^{L} \beta_\ell(K) = 1$. Also, since the instantaneous throughput at a UE under RIS vector index $\tilde{\ell}$ is upper bounded
by the optimal throughput in~\eqref{eq_opt_benchmark},
\begin{equation}\label{eq:pf_avg_throughput_ub}
T_{k,\mathrm{PF}}^{(K)}
\le T_{k,\mathrm{PF}}^{(K)*} \triangleq
\beta_\ell(K)\,Q_{k,\ell}\,R_k^{\mathrm{opt}}.
\end{equation}
To facilitate our analysis further, we introduce an auxiliary scheduler which operates in the following manner:
Whenever the RIS operates in configuration $\ell$, it
selects a UE in $\mathcal{K}_\ell$ with the largest PF metric, given by~\eqref{eq:Qkl_def}. 
To evaluate the throughput achieved under the auxiliary scheduler, recall that the RIS state changes once every $T_s$~s. Let $\tilde{T}_s$ as the number of slots within $T_s$~s. In this view, let $N_k$ be the number of slots in which UE-$k \in \mathcal{K}_\ell$ is scheduled. Then,
$N_k \sim \mathrm{Binomial}(\tilde{T}_s, Q_{k,\ell})$.
Hence, the average throughput achieved by UE-$k \in \mathcal{K}_\ell$ under the above auxiliary scheduler is $T_{k,\mathrm{AS}}^{(K)}=$
\begin{equation}\label{eq_trpt_aux}
\frac{p_\ell}{\tilde{T}_s}
\mathbb{E}\!\left[
\sum_{t=1}^{N_k} R_k^{\mathrm{opt}}
\right]
=
\frac{p_\ell}{\tilde{T}_s}\,
\mathbb{E}[N_k]\,
R_k^{\mathrm{opt}}
=
p_\ell\,Q_{k,\ell}\,R_k^{\mathrm{opt}}.
\end{equation}
We now relate the throughput achieved by the two scheduling schemes.  Recall that, when $T_c \to \infty$, the PF scheduling policy maximizes the following system utility:
\begin{equation}\label{eq_utility}
U_{\mathrm{PF}}(K) = \sum\nolimits_{k=1}^{K} \log T_k^{(K)},
\end{equation}
over all feasible schedulers~\cite{PF_max} and $T_k^{(K)}$ is the long-term average throughput of UE-$k$. Since the auxiliary scheduler is restricted to selecting a UE within $\mathcal{K}_\ell$, its utility is a lower bound on the utility achieved by the PF scheduler as $T_c \rightarrow \infty$. Then, using the optimality of the PF scheduler w.r.t.~\eqref{eq_utility}, as $T_c \rightarrow \infty$, we can relate~\eqref{eq:pf_avg_throughput} and~\eqref{eq_trpt_aux} as
\begin{equation}
\sum\nolimits_{k=1}^{K} \log T_{k,\mathrm{AS}}^{(K)}
\le
\sum\nolimits_{k=1}^{K} \log T_{k,\mathrm{PF}}^{(K)}
\le
\sum\nolimits_{k=1}^{K} \log T_{k,\mathrm{PF}}^{(K)*}.
\label{eq:sandwich}
\end{equation}
Now, by splitting the summation according to the $L$ classes, and upon substitution of the relevant expressions for all the terms and  further simplifications, the inequality between first and last terms in~\eqref{eq:sandwich} can be simplified as  
\begin{equation}\label{eq_KL_div}
\sum\nolimits_{\ell=1}^{L}
p_\ell
\log\!\left({p_\ell}\big/{\beta_\ell(K)}\right)
\le 0.
\end{equation}
We observe that the left hand side of~\eqref{eq_KL_div} represents the Kullback-Leibler (KL) divergence between two probability measures, which is always nonnegative and equals zero if and only if
$\beta_l(K)=p_l$ for all $\ell \in [L]$.
Hence, for any $K$, 
\begin{equation}\label{eq_KL_div_outcome}
\lim\nolimits_{T_c \rightarrow \infty} \beta_\ell(K) = p_\ell, \qquad \ell = 1,\ldots,L.
\end{equation}
Next, to evaluate $Q_{k,\ell}$, 
we note that UE--$k$ is selected by a PF scheduler if
$
{R_k^{\tilde{\ell}}(t)}\big/{T_k}
>
{R_j^{\tilde{\ell}}(t)}\big/{T_j},
\quad
\forall j\in\mathcal{K}_\ell,\ j\neq k.
$
Substituting \eqref{eq:Rk_inst} into the above condition and using the monotonicity of $\log(\cdot)$, the scheduling decision is characterized by
$
{
\log_2\left(
1+
\beta_k |\alpha_k|^2 C_\ell
\right)
}\big/{T_k}
>
{
\log_2\left(
1+
\beta_j |\alpha_j|^2 C_\ell
\right)
}\big/{T_j},
$
where
\(
C_\ell
\triangleq
\frac{P}{\sigma^2}
\bigl|
\boldsymbol{\phi}_{\tilde{\ell}}^{H}
\mathbf{a}_{\mathrm{RIS}}^{\mathrm{eff}}(\nu_{c,k},\psi_{c,k})
\bigr|^2
\)
is common to all UEs within $\mathcal{K}_\ell$.
Assuming identical large-scale fading for UEs belonging to the same class\footnote{For analytical tractability, we assume identical path loss for the UEs within a class \cite{Asymptotic_analysis}. This is used only to simplify the analysis; the core idea extends even with unequal path losses also, similar to~\cite{pathloss_diff}}, i.e., $\beta_k=\beta_j$ for all $k,j\in\mathcal{K}_\ell$, and invoking ergodicity, we can show $T_k=T_j$. Then the above simplifies to
$
|\alpha_k|^2 > |\alpha_j|^2,
\quad
\forall j\in\mathcal{K}_\ell,\ j\neq k.
$
Now, define $X_k \triangleq |\alpha_k|^2$ which is distributed as: $X_k \sim \mathrm{Exp}(1)$.
Then $Q_{k,\ell}$ can be computed as
$
Q_{k,\ell}
=
\Pr\!\left(
X_k > X_j,\ \forall j\in\mathcal{K}_\ell,\ j\neq k
\right).
$
To evaluate this value, let $K_\ell \triangleq |\mathcal{K}_\ell|$. Then, conditioning on $X_k(t)=x$, and averaging over $X_k(t)$ we obtain
\begin{align}
Q_{k,\ell}
&\stackrel{(b)}{=}
\int_{0}^{\infty}
\Pr\!\left(
X_j < x,\ \forall j\neq k
\right)
f_{X_k}(x)\,dx \label{eq:Qkl_conditional} \\
&\stackrel{(c)}{=}
\int_{0}^{\infty}
\left(1-e^{-x}\right)^{K_\ell-1}
e^{-x}\,dx, \label{eq:Qkl_integral}
\end{align}
where in $(b)$, $f_{X_k}(x)$ denotes the PDF of $X_k$ and in $(c)$, we used the expressions for the PDF and CDF of $X_k$. It is easy to show that the above integral evaluates to $Q_{k,\ell} = {1}\big/{K_\ell}$. Finally, applying the Sandwich Theorem to~\eqref{eq:sandwich}, and substituting the expressions derived in~\eqref{eq_KL_div_outcome} and for $Q_{k,\ell}$ in~\eqref{eq:pf_avg_throughput}, we obtain
\begin{equation}\label{eq_theorem_statement_per_UE}
T_{k,\mathrm{PF}}^{(K)} = {R_k^{\mathrm{opt}}}\big/{K}.
\end{equation}
Summing both sides of~\eqref{eq_theorem_statement_per_UE} over $k$, the proof follows.
\section{Proof of Theorem 2}
\label{appendix_proof_thm2}
The main idea of the first part of the proof is to characterize the confidence with which~\eqref{eq_TVD_reqd} holds. Within $\tilde{T}_c$ time slots corresponding to the PF averaging window, the RIS switches between $\tilde{T}_c/\tilde{T}_s$
possible configurations with state $\boldsymbol{\phi}_\ell$ being selected with probability $p_\ell$
(see~\eqref{eq_final_sampling_distribution}). 
Let $X_\ell$ be the number of times the RIS
configuration $\boldsymbol{\phi}_\ell$ is selected within $\tilde{T}_c$ slots; then we can show that $X_\ell \sim \mathrm{Binomial}(\tilde{T}_c/\tilde{T}_s,p_\ell)$. Further, the empirical PMF of the RIS state taking the value $\boldsymbol{\phi}_\ell$ is given by $\hat{p}_\ell = \frac{X_\ell}{\tilde{T}_c/\tilde{T}_s}$. Then, by
Hoeffding’s inequality~\cite{Hoeffding_Journal}, we can write
$
\Pr\!\left(|\hat{p}_\ell-p_\ell|>\varepsilon_1\right)
\le
2\exp\!\left(-2\frac{\tilde{T}_c}{\tilde{T}_s}\varepsilon_1^2\right).
$
Taking a union bound over all $L$ configuration states, we obtain
\begin{equation}\label{eq_hoeff_RIS_schedule}
\Pr\!\left(\exists \ell:\,|\hat{p}_\ell-p_\ell|>\varepsilon_1\right)
\le
2L\exp\!\left(-2({\tilde{T}_c}\big/{\tilde{T}_s})\varepsilon_1^2\right).
\end{equation}
Now, we can show that when the first lower-bound on $\tilde{T}_c$ given in~\eqref{eq:thm_reqd_Tc} holds, the probability bound in~\eqref{eq_hoeff_RIS_schedule} is at most~$\eta_1$, which proves the first term in~\eqref{eq:thm_reqd_Tc}. Similarly, from~\eqref{eq_emp_sched_freq_vector}, $\hat{q}_k = \frac{1}{\tilde{T_c}}\sum\nolimits_{t=1}^{\tilde{T_c}} S_k(t)$. Since $S_k(t)\in\{0,1\}$ across $t$ are bounded, the sequence $\{S_k(t)\}_{t=1}^{\tilde{T}_c}$ obeys the bounded difference property. Hence, using the Azuma-Hoeffding’s inequality, we get
\begin{equation}\label{eq_q_k_hat_deviation}
\Pr\!\left(|\hat{q}_k - q_k| > \varepsilon_2\right)
\le
2\exp\!\left(-2 \tilde{T_c}\varepsilon_2^2\right).
\end{equation}
To obtain a valid bound across all UEs, we apply a union bound over the events in~\eqref{eq_q_k_hat_deviation} across all UEs, yielding
\begin{equation}\label{eq_hoeff_UE_schedule}
\Pr\!\left(\exists k:\, |\hat{q}_k - q_k| > \varepsilon_2\right)
\le
2K \exp\!\left(-2 \tilde{T}_c \varepsilon_2^2\right).
\end{equation}
Now, whenever the second lower-bound on $\tilde{T}_c$ in~\eqref{eq:thm_reqd_Tc} holds, the bound in~\eqref{eq_hoeff_UE_schedule} is at most $\eta_2$. 

What remains to prove is~\eqref{eq:error_metric_normalized}.
To this end, when~\eqref{eq:thm_reqd_Tc} holds, by virtue of~\eqref{eq_TVD_reqd}, RIS vector
$\boldsymbol{\phi}_\ell$ is realized with probability
$\hat{p}_\ell \in (p_\ell -  \varepsilon_1, p_\ell +  \varepsilon_1)$. 
Let $q_{k,\ell}$ and $\hat q_{k,\ell}$ denote the true and empirical conditional probabilities of scheduling UE-$k$ when the RIS state is $\boldsymbol{\phi}_\ell$, respectively. Further, let $\ell^*(k)$ denote the index of the RIS phase configuration yielding the maximum achievable rate for UE-$k$, i.e., $R_k^{\mathrm{opt}} = R_{k,\ell^*(k)}$. By the law of total probability, the long-term and empirical scheduling frequencies of UE-$k$ can be expressed as
$
q_k = \sum_{\ell=1}^{L} p_\ell q_{k,\ell},$ and $\hat q_k = \sum_{\ell=1}^{L} \hat p_\ell \hat q_{k,\ell}.
$ Also, when~\eqref{eq:thm_reqd_Tc} holds, UE-$k$ is scheduled with probability
$\hat{q}_{k} \in (q_{k} - \varepsilon_2, q_{k} + \varepsilon_2)$.
Accordingly, the empirical and optimal throughputs of UE-$k$ are respectively
\begin{equation*}
\tilde{T}_k 
= \sum\nolimits_{\ell=1}^{L} \hat{p}_\ell \, \hat{q}_{k,\ell} \, R_{k,\ell}, T_k^* 
= \frac{1}{K} R_k^{\mathrm{opt}}
= \sum\nolimits_{\ell=1}^{L} p_\ell \, q_{k,\ell} \, R_{k,\ell}.
\end{equation*}
where the last equality follows from the asymptotic PF throughput characterization established in Theorem~\ref{thm:infinite_Tc}.
Hence, 
\begin{equation}\label{eq_diff_trpt_initial}
\tilde{T}_k - T_k^*
= \sum\nolimits_{\ell=1}^{L} 
\left( \hat{p}_\ell \hat{q}_{k,\ell} - p_\ell q_{k,\ell} \right) R_{k,\ell}.
\end{equation}
Adding and subtracting $p_\ell \hat{q}_{k,\ell}$, we can rewrite~\eqref{eq_diff_trpt_initial} as
\begin{equation}
\tilde{T}_k - T_k^*
= \sum\nolimits_{\ell=1}^{L} 
\Big[
(\hat{p}_\ell - p_\ell)\hat{q}_{k,\ell}
+ p_\ell(\hat{q}_{k,\ell} - q_{k,\ell})
\Big] R_{k,\ell}.
\end{equation}
Taking the magnitude and applying the triangle inequality to the right hand side, we obtain
\begin{equation*}
\left| \tilde{T}_k - T_k^* \right|
\le
\sum_{\ell=1}^{L} 
|\hat{p}_\ell - p_\ell| \, \hat{q}_{k,\ell} \, R_{k,\ell} +
\sum_{\ell=1}^{L} 
|\hat{q}_{k,\ell} - q_{k,\ell}| \, p_\ell \, R_{k,\ell}.
\end{equation*}
Now, note that $\hat{q}_{k,\ell} , p_\ell \le 1, R_{k,\ell} \leq R_k^{\mathrm{opt}}$, and $\sum\nolimits_{\ell=1}^{L} |\hat{p}_\ell - p_\ell| \le 2\varepsilon_1,$ and $\sum_{\ell=1}^{L} |\hat{q}_{k,\ell} - q_{k,\ell}| = |\hat{q}_{k} - q_{k}| \le 2\varepsilon_2$ by~\eqref{eq_TVD_reqd}. Applying these inequalities to the above, we get
\begin{equation}
\left| \tilde{T}_k - T_k^* \right|
\le
(2\varepsilon_1 + 2\varepsilon_2)\, R_k^{\mathrm{opt}}.
\end{equation}
Using the expression for $T_k^*$, it follows that
\begin{equation}
    {|\tilde{T}_k-T_k^*|}\big/{T_k^*}
\le
2K(\varepsilon_1+\varepsilon_2), \forall \ k.
\end{equation}
Finally, taking the maximum over all $k$, the proof follows.

\bibliographystyle{IEEEtran}
\bibliography{IEEEabrv,References}	

\end{document}